\newtheorem{theorem}{Theorem}
\newtheorem{Lemma}{Lemma}
\newtheorem{lemma}[Lemma]{$\mathbf{Lemma}$}
\begin{document}
\title{ {\huge Simple Semi-Grant-Free Transmission Strategies Assisted by Non-Orthogonal Multiple Access}}

\author{ Zhiguo Ding, \IEEEmembership{Senior Member, IEEE}, Robert Schober, \IEEEmembership{Fellow, IEEE},  Pingzhi Fan,   \IEEEmembership{Fellow, IEEE},   and H. Vincent Poor, \IEEEmembership{Fellow, IEEE} \thanks{

    Z. Ding and H. V. Poor are  with the Department of
Electrical Engineering, Princeton University, Princeton, NJ 08544,
USA. Z. Ding
 is also  with the School of
Electrical and Electronic Engineering, the University of Manchester, Manchester, UK (email: \href{mailto:zhiguo.ding@manchester.ac.uk}{zhiguo.ding@manchester.ac.uk}, \href{mailto:poor@princeton.edu}{poor@princeton.edu}).
R. Schober is with the Institute for Digital Communications,
Friedrich-Alexander-University Erlangen-Nurnberg (FAU), Germany (email: \href{mailto:robert.schober@fau.de}{robert.schober@fau.de}).
P. Fan is with the Institute of Mobile
Communications, Southwest Jiaotong University, Chengdu, China (email: \href{mailto:pingzhifan@foxmail.com}{pingzhifan@foxmail.com}).

  }\vspace{-2em}}
 \maketitle
\begin{abstract}
Grant-free transmission is an important feature to be supported by future wireless networks since it reduces the signalling overhead caused by conventional grant-based schemes.  However, for grant-free transmission, the number of users admitted to the same channel is not caped, which can lead to a failure of multi-user detection. This paper  proposes non-orthogonal multiple-access (NOMA) assisted semi-grant-free (SGF) transmission, which is a compromise between  grant-free and  grant-based schemes. 
In particular, instead of reserving channels either for grant-based users or grant-free users,   we focus on an SGF communication scenario, where users are admitted to the same channel via a combination  of grant-based   and grant-free protocols. As a result, a channel  reserved by a grant-based user can be shared by    grant-free users, which improves  both connectivity and spectral efficiency.  Two NOMA assisted SGF contention control mechanisms   are developed  to ensure that, with a small amount of signalling overhead, the number of   admitted grant-free users is carefully controlled  and the interference from the grant-free users to the grant-based users is  effectively suppressed.    Analytical results are provided to 
demonstrate that the two proposed SGF mechanisms employing different successive interference cancelation decoding orders are applicable to different practical network scenarios.  


\end{abstract} \vspace{-1em}

\section{Introduction}
Non-orthogonal multiple access (NOMA) has been recently recognized as a promising solution to realize the three key performance requirements  of next-generation  mobile networks, namely enhanced Mobile Broadband (eMBB), 
 Ultra Reliable Low Latency Communications (URLLC), and 
 massive Machine Type Communications (mMTC) \cite{7894280, nomama,jsacnomaxmine}. For example,  existing  studies have demonstrated  that the use of NOMA can significantly improve the system throughput for downlink and uplink transmission without consuming extra bandwidth, which is particularly important for eMBB \cite{6933459,7557079,7812683}. Since NOMA ensures that multiple users can be served in the same time/frequency resource, the users do not have to wait  for serving  even if there are not sufficient  orthogonal resource blocks available, and hence the latency experienced by the users   is   reduced, which is a useful feature for the support of URLLC \cite{6871674he,8345745,Bennisurllc}. The key challenge in realizing  mMTC is the support of massive connectivity, given the  scarce  bandwidth resources, for which NOMA is a perfect solution as it encourages users to share their bandwidth resources instead of solely occupying them \cite{huawei,8395153,8352626}.  While   the application of NOMA to eMBB has been extensively studied, there are few works on   the application of NOMA to URLLC and mMTC.


This paper    focuses  on the application of NOMA to grant-free transmission which is an important feature to be included in mMTC and URLLC. The basic idea of grant-free transmission is that a user is encouraged to transmit whenever it has data to send, without getting a grant from the base station. Therefore,  the lengthy handshaking process between the user and the base station is avoided and  the associated signalling overhead is reduced. The key challenge for grant-free transmission is contention, as multiple users may  choose the same channel to transmit at the same time. To resolve this problem,  two types of grant-free solutions have been proposed. One is to exploit spatial degrees of freedom by applying massive multiple-input multiple-output (MIMO)    to resolve contention \cite{8323218,8320821}. The other is to apply the NOMA principle and use sophisticated multi-user detection (MUD) methods, such as parallel interference cancellation (PIC) and compressed  sensing \cite{8482464,7972955,jsacnoma10}. In general, these grant-free schemes can be viewed as special cases of random access, where the base station does not play any role in multiple access, similar to computer networks. As a result, there is no centralized control for the number of users participating in contention, which means that these grant-free protocols fail  if there is an excessive number of  active users/devices, a likely situation for mMTC and URLLC. 

The aim of this paper is to design    NOMA assisted semi-grant-free (SGF) transmission schemes, which can be viewed as a compromise between  grant-free transmission and  conventional grant-based schemes. In particular, instead of reserving channels either for grant-based users or grant-free users,  we focus on an SGF communication scenario   in this paper, where  one user is admitted to a channel via a conventional grant-based protocol and the other users are admitted  to     the same channel in an opportunistic and grant-free manner.   User connectivity can be improved  by considering this   scenario with a combination  of grant-based and grant free protocols,  as   all   channels   in the network are opened up for grant-free transmission, even if they have  been  reserved by  grant-based users.  In order to guarantee that the quality of service (QoS) requirements of the grant-based users are met, the contention among the opportunistic grant-free  users needs to be carefully controlled to ensure that the grant-free users   transmit only if they do not cause too much performance degradation to the grant-based users. Note that such contention control is not possible with pure grant-free protocols \cite{8323218,8320821,8482464,7972955,jsacnoma10}. 

Two SGF schemes for contention control are proposed in this paper, where, unlike pure grant-free transmission,   multiple access is still   controlled by the base station, but with lower signalling overhead compared to the grant-based case.  
For the first   proposed SGF scheme, the base station controls  multiple access by broadcasting a threshold value  to all users and also provides a criterion for the users to determine if they are qualified for transmission, a strategy similar to random beamforming \cite{viswanath02,Zhiguo_mmwave}. As a result,  the  contention control is realized   in an {\it open-loop} manner, which does not require  the users' channel state information (CSI) to be known  at the base station prior to transmission. Compared to grant-based transmission, less signalling overhead is introduced  by SGF since all   users which satisfy the criterion set by the base station are allowed  to  transmit immediately, without going through    individual handshaking processes. Compared to grant-free transmission, in SGF, the number of the users admitted to the same channel  can be carefully controlled, which helps avoid  MUD failure  due to an excessive number of admitted users.       For the second   proposed SGF scheme, {\it distributed contention control} is applied, where a fixed number of users with favourable  channel conditions is granted access. We note that, for the proposed open-loop SGF scheme, the number of   users admitted to the same channel is random, similar to conventional grant-free transmission. In other words,    open-loop SGF may still admit more users to the same channel than   can be supported, whereas  the proposed SGF scheme    with distributed contention control   ensures that   a fixed number of users is granted access.  

The  design of the proposed SGF protocols   largely depends on the   decoding order of successive interference cancellation (SIC) at the base station. For example, if the grant-based user's signal is  decoded in the first stage of SIC, in order to reduce the performance degradation experienced by this user, the opportunistic users which are additionally  granted access to the same channel should have weak channel conditions. On the other hand, if the grant-based user's signal is   decoded in the last stage of SIC, strong grant-free users should be granted the right to transmit. The impact of the SIC decoding order on the performance  of the proposed SGF protocols is investigated. Particularly,   for   open-loop SGF, the threshold broadcasted by the base station and the criterion for the users to determine if they are qualified for transmission    are designed based on  the adopted  SIC decoding order, whereas for SGF with distributed contention control, the criterion for user contention  is   adapted to the   SIC decoding order.    
 Analytical results are provided to demonstrate the superior performance of the proposed NOMA assisted SGF protocols and also the impact of different SIC decoding orders  on suitable application scenarios for  the proposed SGF protocols. In particular, if the grant-based user's signal   is decoded in the first stage of SIC, the proposed SGF protocols are suited for the
scenario, where the grant-based user is close to the base station   and the grant-free users are cell-edge users. On the other hand,  if  the grant-based user's signal  is decoded in the last stage of SIC, the proposed SGF protocols are ideally suited for  the scenario, where the grant-based user is a cell-edge user and the grant-free users are close to the base station.

\section{System Model} \label{section system model}
Consider an SGF uplink NOMA  scenario, where multiple users   are admitted to the same channel  via a combination  of grant-based and grant-free protocols. In particular, among these users, assume that there is one  user, denoted by $\textrm{U}_0$, which needs to be served with  high priority. Via a grant-based protocol, $\textrm{U}_0$ is allocated a dedicated orthogonal resource block, denoted by $\text{B}_0$, for its uplink transmission.
In addition, there are $M$  grant-free   users, denoted by $\textrm{U}_m$, $1\leq m\leq M$,  which  do not have time-critical data   and   compete with each other for admission  to $\text{B}_0$ in an opportunistic  manner.     In a typical machine-type communication network, $\textrm{U}_0$ may  be a    sensor    for healthcare monitoring or critical care, and $\textrm{U}_m$, $1\leq m\leq M$,  may be sensors for power meters or environmental monitoring. In this SGF  scenario,   all  channels   in the network are available for grant-free transmission, even if they have  been  reserved by  grant-based users. Hence, massive connectivity can be supported in a spectrally efficient manner. 

\subsection{Assumptions for SGF Protocol Design}
The proposed SGF protocols are designed under the following assumptions:
\begin{itemize}
\item Recall that  $\textrm{U}_0$ is admitted to $\text{B}_0$ by using  a grant-based protocol. It is assumed that via the broadcast signalling during the handshaking process between $\text{U}_0$ and the base station,    $\textrm{U}_0$'s CSI as well as its transmit power, denoted by $P_0$, become available at the base station and at all   grant-free  users in an error-free manner. 

\item Prior to multiple access, each user knows its own CSI perfectly, but the base station does not acquire  the  CSI  of   the grant-free users,   $\textrm{U}_m$, $1\leq m\leq M$, which reduces the signalling overhead.  Via the proposed SGF protocols, a   portion of the $M$    users are granted  access. We assume that there is a sufficient number of orthogonal preambles for the base station to acquire the CSI of the transmitting grant-free   users in order to facilitate   MUD.

\item $\textrm{U}_m$, $1\leq m\leq M$,  are admitted to $\text{B}_0$ under the condition that the target data rate of $\text{U}_0$, denoted by $R_0$, can still  be  achieved  with  high probability, such that the QoS requirements of   $\text{U}_0$ are satisified.  

\item We assume  that all  users' channels, denoted by $h_m$, $0\leq m \leq M$, exhibit  independent and identically distributed (i.i.d.)   quasi-static  Rayleigh fading. The ordered channel gains are denoted by $h_{(m)}$, where $|h_{(1)}|^2\leq \cdots \leq |h_{(M)}|^2$. 
\end{itemize}

\subsection{Low-Overhead Protocols for Contention Control}
A key step for SGF transmission is   low-overhead contention control.  In this paper, we focus on two types of low-overhead protocols, as described  in the following: 
\subsubsection{Open-loop contention control} The base station broadcasts a channel quality threshold $\tau$. The  users decide whether to join the NOMA transmission by comparing their channel gains to $\tau$. A user is admitted if   its channel gain is below or above the threshold  depending on  the  SIC order employed  by the base station,   see Sections \ref{section sic1} and \ref{section sic2}. 

\subsubsection{Distributed contention control}
Distributed contention control has been extensively studied in the contexts of opportunistic carrier sensing \cite{Zhao2005s} and timer-backoff-based sensor selection \cite{Bletsas06, 6334506}. Take the distributed contention control mechanism proposed in \cite{Zhao2005s} as an example, which   selects the user with the strongest (or the weakest) channel   for   channel access. Once the contention time window starts, each user chooses  a backoff $\tau_m$, which is a strictly decreasing (or increasing) function of the user's channel gain. A user  transmits a beacon to the base station after  $\tau_m$ expires, provided  that $\tau_m$ is smaller than the contention  time window.    As such, the user  with the best (or worst)  channel condition waits for the shortest time and hence   identifies  itself to the base station first. This method will be adopted for distributed contention control for the proposed SGF schemes. We note that  more advanced distributed contention control schemes can select multiple strong (or weak) users, and  a user can acquire the other users' CSI    by using the time differences between the transmitted beacons \cite{Zhao2005s}.  

 The design of the proposed SGF schemes  depends on the SIC decoding order, as illustrated in the  following two sections.

\section{Semi-Grant-Free Protocol - Type I}\label{section sic1}
In this section, we assume  that the message from $\text{U}_0$ is decoded in the first stage of SIC at the base station. 

\subsection{Open-Loop Contention Control}
The base station   broadcasts a threshold $\tau$, which can be interpreted analogous to   metrics  for the interference temperature in cognitive radio networks \cite{Zhiguo_CRconoma}. Unlike grant-free protocols, which grant  all  $M$ users access, here only   users whose channel gains are below this threshold are admitted to $\text{B}_0$. Without  loss of generality, assume that there are $N$ users whose channel gains can satisfy this condition  and share $\text{B}_0$ with $\text{U}_0$.  

In order to simplify notations, the noise power is assumed to be normalized,   and hence the transmit signal-to-noise ratio (SNR) for $\mathrm{U}_0$'s signal is  identical to  the user's transmit power, $P_0$. Furthermore, assume that all  grant-free  users $\mathrm{U}_m$ use the same transmit power,  denoted by $\bar{P}$. 
Therefore, with SIC, the base station can support the following data rates for the $(N+1)$ users:
\begin{align}\label{rate set}
&\left\{ \log\left(1 +\frac{|h_0|^2P_0}{\sum^{N}_{j=1}|h_{(j)}|^2\bar{P}+1}\right),  \log\left(1 +\frac{|h_{(i)}|^2\bar{P}}{\sum^{i-1}_{j=1}|h_{(j)}|^2\bar{P}+1}\right), 1\leq i\leq N\right\}.
\end{align}
The impact of SGF NOMA transmission on the rates of $\text{U}_0$ and the $N$ grant-free  users is studied separately in the following subsections. 
\subsubsection{Impact of SGF transmission on $\text{U}_0$}  
Based on  \eqref{rate set},  the outage probability of $\text{U}_0$ is given by
\begin{align}\label{q111}
\mathbb{P}_0^{\mathrm{I, OL}} =& \sum_{n=1}^{M}\mathbb{P}(N=n)  \underset{Q_1}{\underbrace{\mathbb{P}\left(\left.\log\left(1+\frac{|h_0|^2P_0}{\sum^{n}_{j=1}|h_{(j)}|^2\bar{P}+1}\right)<R_0\right| N=n\right)}}\\\nonumber
&+ \mathbb{P}\left(|h_{(1)}|^2>\tau \right) \mathbb{P}\left(\left.\log\left(1+ |h_0|^2P_0 \right)<R_0\right| N=0\right),
\end{align}
where $\mathbb{P}(N=n)$ denotes the probability  that there are  $n$ grant-free users whose channel gains are smaller  than $\tau$. The following theorem provides a closed-form expression for $\mathbb{P}_0^{\mathrm{I, OL}} $. 
 
 \begin{theorem}\label{theorem1}
 The outage probability of $\text{U}_0$ achieved by the proposed open-loop Type I SGF protocol can be expressed as follows:
\begin{align}\nonumber 
\mathbb{P}_0^{\mathrm{I, OL}}  =& \sum_{n=1}^{M}\frac{M!}{n!(M-n)!}e^{-(M-n)\tau}\left(1-e^{-\tau}\right)^n   \sum^{n}_{p=0} \frac{{n \choose p}(-1)^pe^{-p\tau} }{(1-e^{-\tau})^n } \left( \sum_{l=0}^{n-1} \frac{  \epsilon_0P_0^{-1} \bar{P}e^{-\epsilon_0P_0^{-1}(1+  \bar{P}p\tau) } }{ (1+\epsilon_0P_0^{-1} \bar{P})^{(l+1)} } \right. \\ \label{eqtherom1} & \left.+ e^{-\epsilon_0P_0^{-1}}-e^{-\epsilon_0P_0^{-1}(1+\bar{P}p\tau)}   
\right) +    \left(1 - e^{-\epsilon_0P_0^{-1}}\right),
\end{align}
where $\epsilon_0=2^{R_0}-1$.
 \end{theorem}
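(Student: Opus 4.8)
The plan is to condition on $N$, the number of admitted grant-free users, and to reduce the conditional outage event to the joint law of the independent exponential gain $|h_0|^2$ and the aggregate interference gain $S=\sum_{j=1}^{n}|h_{(j)}|^2$. Since the gains $|h_m|^2$ are i.i.d.\ unit-mean exponentials, each grant-free user satisfies $|h_m|^2<\tau$ independently with probability $1-e^{-\tau}$; hence $N$ is Binomial and $\mathbb{P}(N=n)=\frac{M!}{n!(M-n)!}(1-e^{-\tau})^{n}e^{-(M-n)\tau}$, which already furnishes the outer weight in \eqref{eqtherom1}. The $N=0$ contribution in \eqref{q111} is immediate, since $\mathbb{P}(N=0)=\mathbb{P}(|h_{(1)}|^2>\tau)=e^{-M\tau}$ and the conditional outage is $\mathbb{P}(\log(1+|h_0|^2P_0)<R_0)=1-e^{-\epsilon_0P_0^{-1}}$; this, after the rearrangement described below, produces the trailing $\left(1-e^{-\epsilon_0P_0^{-1}}\right)$ term.

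First I would evaluate the conditional outage $Q_1$ in \eqref{q111}. Conditioned on $N=n$, the $n$ admitted gains are i.i.d.\ with the \emph{truncated} exponential density $\frac{e^{-x}}{1-e^{-\tau}}$ on $[0,\tau]$, and since $S$ is symmetric in them the ordering is irrelevant. Rewriting the outage event as $|h_0|^2<\epsilon_0P_0^{-1}\left(\bar{P}S+1\right)$ and integrating out the independent exponential $|h_0|^2$ yields
\begin{align}\nonumber
Q_1 = 1-e^{-\epsilon_0P_0^{-1}}\,\mathbb{E}\!\left[\left.e^{-\epsilon_0P_0^{-1}\bar{P}S}\,\right|\,N=n\right],
\end{align}
so the whole task collapses to the Laplace transform of $S$ evaluated at $s=\epsilon_0P_0^{-1}\bar{P}$. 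By independence this factors as $\left(\mathbb{E}\,e^{-sX}\right)^{n}$, where $X$ denotes one truncated gain and $\mathbb{E}\,e^{-sX}=\frac{1-e^{-(1+s)\tau}}{(1+s)(1-e^{-\tau})}$.

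The remaining work is purely algebraic: to match the stated double-sum form I would expand the numerator $\left(1-e^{-(1+s)\tau}\right)^{n}$ by the binomial theorem, which is the origin of the factors $\binom{n}{p}(-1)^{p}$ and of the exponentials $e^{-\epsilon_0P_0^{-1}(1+\bar{P}p\tau)}$ in \eqref{eqtherom1} (here the external $e^{-p\tau}$ combines with the internal $e^{-sp\tau}$ to reconstruct $e^{-(1+s)p\tau}$). The inner sum over $l$ is then the finite geometric identity $1-(1+s)^{-n}=s\sum_{l=0}^{n-1}(1+s)^{-(l+1)}$, used to rewrite the power $(1+s)^{-n}$. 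Multiplying $Q_1$ by $\mathbb{P}(N=n)$ cancels the factor $(1-e^{-\tau})^{-n}$, and summing over $1\le n\le M$ gives the first line of \eqref{eqtherom1}.

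I expect the principal obstacle to be the final bookkeeping rather than any single integral. The bracketed quantity in \eqref{eqtherom1}, once summed over $p$, does not equal $Q_1$ itself but $Q_1-\left(1-e^{-\epsilon_0P_0^{-1}}\right)$; one must therefore add back $\left(1-e^{-\epsilon_0P_0^{-1}}\right)$ and reconcile it with the $N=0$ term via $\sum_{n\ge 1}\mathbb{P}(N=n)=1-e^{-M\tau}$ in order to recover exactly \eqref{q111}. A secondary care point is justifying the conditional-law claim --- that given $N=n$ the admitted gains are i.i.d.\ truncated exponentials independent of $|h_0|^2$ --- since the entire reduction of $Q_1$ to a one-dimensional Laplace transform rests on it.
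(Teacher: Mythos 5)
Your proposal is correct, and it reaches \eqref{eqtherom1} by a genuinely cleaner route than the paper. Both arguments share the same skeleton: the binomial law of $N$ (the paper derives it via order statistics, you via independence, but the expression is identical), the representation of the conditional interference $S=\sum_{j=1}^{n}|h_{(j)}|^2$ as a sum of $n$ i.i.d.\ truncated exponentials, and the final reconciliation of the $N=0$ term through $\sum_{n\geq 1}\mathbb{P}(N=n)=1-e^{-M\tau}$ (the paper's Eq.~\eqref{eq56}), which you identified precisely. The difference is in the central computation of $Q_1$. The paper computes the Laplace transform of $S$, \emph{inverts} it to obtain an explicit pdf containing shifted terms $(y-p\tau)^{n-1}e^{-(y-p\tau)}u(y-p\tau)$, proves a separate lemma (Lemma~\ref{lemma1}) to justify extending the integration range to $\infty$ despite the step functions, and then evaluates a double integral over $(|h_0|^2, S)$ using incomplete Gamma functions and a table identity, with a special-case split for $l=0$. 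You instead integrate out the exponential variable $|h_0|^2$ first, which collapses $Q_1$ to $1-e^{-\epsilon_0P_0^{-1}}\,\mathbb{E}[e^{-sS}]$ with $s=\epsilon_0P_0^{-1}\bar{P}$, so the Laplace transform never needs to be inverted; the double-sum structure of the theorem then falls out of the binomial expansion of $\left(1-e^{-(1+s)\tau}\right)^{n}$ together with the finite geometric identity $s\sum_{l=0}^{n-1}(1+s)^{-(l+1)}=1-(1+s)^{-n}$, which also explains \emph{structurally} where the otherwise mysterious $l$-sum in \eqref{eqtherom1} comes from. Your route buys brevity and avoids Lemma~\ref{lemma1}, the step-function bookkeeping, and the integral tables entirely; what the paper's heavier route buys is the explicit pdf \eqref{pdf sum} and Lemma~\ref{lemma1} themselves, which are reused later (in \eqref{fsumx1} and \eqref{Pix1}) where the conditioning variable is an order statistic rather than an independent exponential and the Laplace-transform shortcut is unavailable. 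The one point you should make rigorous rather than flag as a ``care point'' is the conditional-law claim: given $N=n$, the identities of the below-threshold users form a random subset, and conditioned on that subset the corresponding gains are i.i.d.\ truncated exponentials independent of $|h_0|^2$; since $S$ is a symmetric function of them, the sum of the $n$ smallest order statistics has the stated law. This is a one-line argument, and the paper relies on the same fact without further comment.
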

 \begin{proof}
 See Appendix \ref{appendix1}.
\end{proof}

\subsubsection{Asymptotic analysis of $\mathbb{P}_0^{\mathrm{I, OL}} $} In order to obtain some intuition about  $\mathbb{P}_0^{\mathrm{I, OL}} $, an asymptotic analysis is carried out in the following. 

We first consider the case, where   $\bar{P}$ is fixed,  $P_0\rightarrow\infty$ and $\tau\sim \frac{1}{P_0} $.  In this case, $\mathbb{P}_0^{\mathrm{I, OL}} $ can be approximated as follows: 
\begin{align}
\mathbb{P}_0^{\mathrm{I, OL}}  \overset{(a)}{\approx}&     \epsilon_0P_0^{-1} \bar{P}  \sum_{n=1}^{M}\frac{M!}{(n-1)!(M-n)!}e^{-(M-n)\tau}   \sum^{n}_{p=0}  {n \choose p}(-1)^pe^{-p\tau}     
  +    \epsilon_0P_0^{-1}
  \\\nonumber
  =&     \epsilon_0P_0^{-1} \bar{P}  \sum_{n=1}^{M}\frac{M!}{(n-1)!(M-n)!}e^{-(M-n)\tau}  \left(1-e^{-\tau}\right)^n
  +    \epsilon_0P_0^{-1},
\end{align}
where step (a) follows by using the binomial expansion and the series expansion of the exponential function. 
The approximation of $\mathbb{P}_0^{\mathrm{I, OL}} $ can be further simplified as follows: 
\begin{align}\nonumber
\mathbb{P}_0^{\mathrm{I, OL}}  \approx&     \epsilon_0P_0^{-1}+ \epsilon_0P_0^{-1} \bar{P}  \sum_{n=1}^{M}\frac{M!}{(n-1)!(M-n)!\tau^n }e^{-(M-n)\tau} 
  \\ \label{deltax1}
   \approx& \epsilon_0P_0^{-1}+ \underset{\Delta_{\mathbb{P}_0^{\mathrm{I, OL}} }}{\underbrace{ \tau   \epsilon_0P_0^{-1} \bar{P} Me^{-(M-1)\tau}}}   
 .
\end{align}

{\it Remark 1:} Recall that the outage probability of $\text{U}_0$ is degraded by admitting the grant-free users to $\text{B}_0$. The approximation in \eqref{deltax1} clearly shows this degradation, as $\Delta_{\mathbb{P}_0^{\mathrm{I, OL}} }$ is the extra cost for admitting the grant-free users to $\text{B}_0$. However,  this cost goes to zero, if $\bar{P}$ is fixed,  $P_0\rightarrow\infty$, and $\tau\sim \frac{1}{P_0} $. In other words, by carefully choosing $\tau$, $\bar{P}$, and $P_0$, it is possible to ensure that $\text{U}_0$ experiences the same outage probability with the proposed SGF scheme as in the grant-based case. However,   SGF can ensure that more users are connected, compared to a grant-based scheme.

{\it Remark 2:} The importance of considering the case with  large $P_0$ and  small $\bar{P}$ is explained in the following. Recall that $P_0$ and $\bar{P}$ denote the transmit SNRs at $\text{U}_0$ and the grant-free users, respectively. Therefore, in practice, the case with  large $P_0$ and  small $\bar{P}$  represents an important uplink scenario, where the grant-based user is close to the base station and the grant-free users   are at the edge of the cell. The proposed Type I SGF protocol is ideally   suited to this scenario to support massive connectivity in this situation without causing significant performance degradation to the grant-based user. 

Next, we consider a different  case, where    $\tau$   and $\bar{P}$ are fixed,  and $ {P}_0\rightarrow\infty$. 
 In this case, $\mathbb{P}_0^{\mathrm{I, OL}} $ in \eqref{eqtherom1} can be approximated as follows:
 \begin{align}
\mathbb{P}_0^{\mathrm{I, OL}}  \approx& \epsilon_0P_0^{-1} \bar{P}\sum_{n=1}^{M}\frac{M!}{n!(M-n)!}e^{-(M-n)\tau}    \sum^{n}_{p=0}  {n \choose p}(-1)^pe^{-p\tau} \left( n   + p\tau
\right) +     \epsilon_0P_0^{-1} .
\end{align}
The sum of the binomial coefficients in the above equation can be simplified as follows:
\begin{align}
\sum^{n}_{p=0}  {n \choose p}(-1)^pe^{-p\tau}      p  = -ne^{-\tau}\left(1-e^{-\tau}\right)^{n-1}.
\end{align}
Therefore, the approximation of $\mathbb{P}_0^{\mathrm{I, OL}}  $ can be simplified as follows:
 \begin{flalign}  \label{approximation1}
&\mathbb{P}_0^{\mathrm{I, OL}}  \approx \epsilon_0P_0^{-1}+  \underset{\Delta_{\mathbb{P}_0^{\mathrm{I, OL}} }}{\underbrace{\epsilon_0P_0^{-1} \bar{P}\sum_{n=1}^{M}\frac{M!e^{-(M-n)\tau} \left(1-e^{-\tau}\right)^{n-1} \left(1 - e^{-\tau} - \tau e^{-\tau}
\right) }{(n-1)!(M-n)!}}} .
\end{flalign}
{\it Remark 3:} Similar to the case considered    in Remark 1,  one can also observe that in this case the gap between the outage probabilities of the grant-based and SGF schemes, $\Delta_{\mathbb{P}_0^{\mathrm{I, OL}} }$, is reduced to zero as $P_0$ grows. The outage probability  with a constant $\tau$ is worse than that   with $\tau\sim \frac{1}{P_0}$. On the other hand,  SGF with a constant $\tau$ can support more grant-free users than SGF with  $\tau\sim \frac{1}{P_0}$, when $P_0\rightarrow \infty$. Following     steps similar to the ones to obtain \eqref{deltax1} and \eqref{approximation1}, one can show that $\mathbb{P}_0^{\mathrm{I, OL}} $ also  approaches zero  when $\tau$ and ${P}_0$ are fixed but  $\bar{P}\rightarrow0$.   
\subsubsection{Impact of SGF transmission on grant-free users}\label{subsection1}
Without loss of generality,   assume that $N$ grant-free users have been selected by the proposed protocol. 
As shown in \eqref{rate set}, the   proposed Type I SGF protocol can support the following  rates for the   grant-free users:
\begin{align} \label{taxx}
 \log\left(1 +\frac{|h_{(i)}|^2\bar{P}}{\sum^{i-1}_{j=1}|h_{(j)}|^2\bar{P}+1}\right), 
 \end{align}
where $1\leq i\leq N$. Without loss of generality, assume that each grant-free user wants to send $L$ bits to the base station. The probability that  the number of bits sent by    grant-free user $i$ within $\text{B}_0$ is less than $L$ is given by 
\begin{align}\label{Pi}
\mathbb{P}_i^{\mathrm{I, OL}}  = \mathbb{P}\left(\mathrm{B}_0 \log\left(1 +\frac{|h_{(i)}|^2\bar{P}}{\sum^{i-1}_{j=1}|h_{(j)}|^2\bar{P}+1}\right)
<L\right).
\end{align}

{\it Remark 4}: It is assumed that a user can adapt its transmit data rate according to \eqref{taxx}, which requires that one user has access to the other users' CSI. This CSI knowledge  can be obtained by adopting  the beacon-based distributed contention control  scheme in \cite{Zhao2005s}, which can be applied in the open-loop mechanism  not  for contention control, but  for one user to acquire other users' CSI.    However, if this CSI knowledge is not available to the users,   the probability $\mathbb{P}_i^{\mathrm{I, OL}}  $ in \eqref{Pi} can also be viewed as  a lower bound on the outage probability, if each user uses $\breve{R}_i\triangleq \frac{L}{\mathrm{B}_0}$ as its target data rate.   

  The probability $\mathbb{P}_i^{\mathrm{I, OL}} $ can be calculated as follows:
\begin{align} \label{Pidfd}
\mathbb{P}_i^{\mathrm{I, OL}}  =& \mathbb{P}\left( \sum^{i-1}_{j=1}|h_{(j)}|^2>\breve{\epsilon}_i^{-1}|h_{(i)}|^2-\bar{P}^{-1} \right)\\\nonumber=& \mathbb{P}\left( \sum^{i-1}_{j=1}|h_{(j)}|^2>\breve{\epsilon}_i^{-1}|h_{(i)}|^2-\bar{P}^{-1}, |h_{(i)}|^2>\breve{\epsilon}_i\bar{P}^{-1} \right) +\mathbb{P}\left(|h_{(i)}|^2<\breve{\epsilon}_i\bar{P}^{-1} \right),
\end{align}
where $\breve{\epsilon}_i= 2^{\breve{R}_i}-1$.
Note that there is  a hidden constraint $|h_{(i)}|^2<\tau$ since the channel gains of all  selected users are smaller  than $\tau$. 
It is important to point out that $|h_{(i)}|^2 $ and $\sum^{i-1}_{j=1}|h_{(j)}|^2$ are correlated. To evaluate the probability $\mathbb{P}_i^{\mathrm{I, OL}} $, we note that, conditioned on $|h_{(i)}|^2$,  $\sum^{i-1}_{j=1}|h_{(j)}|^2$ can be viewed  as the sum of $(i-1)$ i.i.d. variables, denoted by $\breve{h}_k$,   with the following cumulative distribution function (CDF):
\begin{align}\label{fhj1}
F_{|\breve{h}_{k}|^2}(y)= \left\{ \begin{array}{ll}\frac{1-e^{-y}}{1-e^{-|h_{(i)}|^2}},
& \text{if}\quad y\leq |h_{(i)}|^2 \\1,&\text{if}\quad y>|h_{(i)}|^2
 \end{array} \right..
\end{align} 
Following  steps similar to those in the proof for Theorem \ref{theorem1}, conditioned on $|h_{(i)}|^2$, the probability density function (pdf) of $\sum^{i-1}_{j=1}|h_{(j)}|^2$ can be obtained as follows:
\begin{align}\label{fsumx1}
&f_{\sum^{i-1}_{j=1}|h_{(j)}|^2}(y)  = \sum^{i-1}_{p=0} \frac{{i-1 \choose p}(-1)^pe^{-p|h_{(i)}|^2} }{(1-e^{-|h_{(i)}|^2})^{i-1} (i-2)!}   (y-p|h_{(i)}|^2)^{i-2} e^{-(y-p|h_{(i)}|^2)} u(y-p|h_{(i)}|^2) ,
\end{align}
where $u(\cdot)$ denotes the unit step function. 
On the other hand, $ |h_{(i)}|^2 $ is the $i$-th smallest value  among $N$  i.i.d. random variables following the distribution in \eqref{fhj}, instead of \eqref{fhj1}. By applying order statistics \cite{David03}, the pdf of $ |h_{(i)}|^2 $  can be found as follows:
\begin{align}\label{fsumx2}
f_{|h_{(i)}|^2}(x) = &c_{N,i}\frac{e^{-x}(1-e^{-x})^{i-1} \left(e^{-x}-e^{-\tau}\right)^{N-i}}{(1-e^{-\tau})^N}, \end{align}
 where $c_{N,i}= \frac{N!}{(i-1)!(N-i)!}$.

If $\tau\leq \breve{\epsilon}_i \bar{P}^{-1}$, the probability   is given by
\begin{align} 
\mathbb{P}_i^{\mathrm{I, OL}}  =&\int^{\tau}_{0} f_{|h_{(i)}|^2}(x)dx=1,  \end{align}
otherwise,
\begin{align} \label{Pix1}
\mathbb{P}_i^{\mathrm{I, OL}}  =&\int^{\breve{\epsilon}_i\bar{P}^{-1}}_{0} f_{|h_{(i)}|^2}(x)dx +
\int_{\breve{\epsilon}_i\bar{P}^{-1}}^{\tau}f_{|h_{(i)}|^2}(x)\int^{\infty}_{\breve{\epsilon}_i^{-1}x-\bar{P}^{-1}}  f_{\sum^{i-1}_{j=1}|h_{(j)}|^2}(y)dydx.
\end{align}
It is worth pointing out that  the upper limit  of the integration of the pdf of  $\sum^{i-1}_{j=1}|h_{(j)}|^2$ should be $|h_{(i)}|^2(i-1)$, but it can be replaced by   $\infty$ as shown in Lemma~\ref{lemma1} in Appendix \ref{appendix1}. 

{\it Remark 5:} The probability $\mathbb{P}_i^{\mathrm{I, OL}} $ can be  evaluated numerically by   substituting \eqref{fsumx1} and \eqref{fsumx2} into \eqref{Pix1}, but  a  closed-form expression is difficult to obtain due to   the step function in \eqref{fsumx1}. 

 
\subsection{Distributed Contention Control}
After the base station broadcasts $\tau$,    assume that there are $N$ users  whose channels are worse  than $\tau$. Only these $N$ users are allowed to participate in contention and a fixed number of users will be admitted to $\text{B}_0$ by applying distributed contention control  as discussed at the end of Section \ref{section system model}. Due to space limitations, we will focus on the case that only one user is granted access, i.e.,    the   user with  channel gain $h_{(1)}$ is scheduled if $|h_{(1)}|^2<\tau$. Therefore,   the outage probability of $\text{U}_0$ can be expressed as follows:
\begin{align}\label{poq6}
\mathbb{P}_0^{\mathrm{I, DCC}}  =& \sum_{n=1}^{M}\mathbb{P}\left(|h_{(n)}|^2<\tau, |h_{(n+1)}|^2>\tau\right)   \mathbb{P}\left(\left.\log\left(1+\frac{|h_0|^2P_0}{ |h_{(1)}|^2\bar{P}+1}\right)<R_0\right| N=n\right)\\\nonumber
&+ \mathbb{P}\left(|h_{(1)}|^2>\tau \right) \mathbb{P}\left(\left.\log\left(1+ |h_0|^2P_0 \right)<R_0\right| N=0\right).
\end{align}
The outage probability in \eqref{poq6} can be equivalently expressed as follows:
\begin{align}\label{poq6new}
\mathbb{P}_0^{\mathrm{I, DCC}}  =&   \underset{Q_2}{\underbrace{\mathbb{P}\left( (|h_{(1)}|^2<\tau, \log\left(1+\frac{|h_0|^2P_0}{ |h_{(1)}|^2\bar{P}+1}\right)<R_0 \right)}}\\\nonumber
&+ \mathbb{P}\left(|h_{(1)}|^2>\tau ,\log\left(1+ |h_0|^2P_0 \right)<R_0\right).
\end{align}
With some algebraic manipulations, $Q_2$ can be expressed as follows: 
\begin{align}\label{Q6xs}
Q_2 =&  \left(1-e^{-M\tau}\right)\left(1-e^{-\epsilon_0P_0^{-1}}\right)- e^{-M\tau}\left(e^{-\epsilon_0P^{-1}_0}-e^{-\epsilon_0P_0^{-1}(1+\bar{P}\tau)}\right)   \\\nonumber&+ e^{M\bar{P}^{-1}}\frac{e^{-(1+M\bar{P}^{-1}\epsilon_0^{-1} P_0)\epsilon_0P^{-1}_0 }-e^{-(1+M\bar{P}^{-1}\epsilon_0^{-1} P_0)\epsilon_0P_0^{-1}(1+\bar{P}\tau) }}{1+M\bar{P}^{-1}\epsilon_0^{-1} P_0} .
\end{align}
By substituting \eqref{Q6xs} into \eqref{poq6new} and using the fact that the second probability in \eqref{poq6new} is a product of two simple probabilities, $\mathbb{P}\left(|h_{(1)}|^2>\tau\right)=e^{-M\tau}$ and $\mathbb{P}\left( \log\left(1+ |h_0|^2P_0 \right)<R_0\right)=1-e^{-\epsilon_0P_0^{-1}}$, a closed-form expression for the outage probability of $\text{U}_0$ can be obtained for the case with distributed contention  control.

In order to obtain some insight, consider the case, where $\bar{P}$ and $\tau$ are fixed, $P_0\rightarrow \infty$. In this case, we have 
\begin{align}\label{Q6x1}
Q_2 \approx&   \left(1-e^{-M\tau}\right)\epsilon_0P_0^{-1}- e^{-M\tau} \epsilon_0P_0^{-1}\bar{P}\tau   \\\nonumber&+ e^{M\bar{P}^{-1}}\frac{e^{-(\epsilon_0P_0^{-1}+M\bar{P}^{-1})  }-e^{-(\epsilon_0P_0^{-1}+M\bar{P}^{-1} )(1+\bar{P}\tau) }}{1+M\bar{P}^{-1}\epsilon_0^{-1} P_0} .
\end{align}
 $Q_2$ can be further approximated as follows:
\begin{align}\label{Q6x2}
Q_2 \approx&   \left(1-e^{-M\tau}\right) \epsilon_0P_0^{-1}- e^{-M\tau} \epsilon_0P_0^{-1}\bar{P}\tau   +\frac{1 -\epsilon_0P_0^{-1}-  e^{-M\tau}\left(1- \epsilon_0P_0^{-1} - \epsilon_0P_0^{-1}\bar{P}\tau \right)}{1+M\bar{P}^{-1}\epsilon_0^{-1} P_0}.
\end{align}
Therefore, the outage probability $\mathbb{P}_0^{\mathrm{I, DCC}}$ can be approximated as follows:
\begin{align}\nonumber
\mathbb{P}_0^{\mathrm{I, DCC}}  \approx&   e^{-M\tau}\epsilon_0P_0^{-1}+ \left(1-e^{-M\tau}\right)\epsilon_0P_0^{-1}- e^{-M\tau} \epsilon_0P_0^{-1}\bar{P}\tau  \\\nonumber & +\frac{1 -\epsilon_0P_0^{-1}-  e^{-M\tau}\left(1- \epsilon_0P_0^{-1} - \epsilon_0P_0^{-1}\bar{P}\tau \right)}{1+M\bar{P}^{-1}\epsilon_0^{-1} P_0}\\\label{Q6x3}=&
   \epsilon_0P_0^{-1} - e^{-M\tau} \epsilon_0P_0^{-1}\bar{P}\tau  +\frac{1 -\epsilon_0P_0^{-1}-  e^{-M\tau}\left(1- \epsilon_0P_0^{-1} - \epsilon_0P_0^{-1}\bar{P}\tau \right)}{1+M\bar{P}^{-1}\epsilon_0^{-1} P_0}.
\end{align}
With some algebraic manipulations, we can obtain the following lemma. 
\begin{lemma}\label{lemma2}
Consider the case, where $\bar{P}$ and $\tau$ are fixed, and $P_0\rightarrow \infty$. In this case, $\mathbb{P}_0^{\mathrm{I, DCC}} $ can be approximated as follows:
\begin{align}  \label{idcc}
\mathbb{P}_0   ^{\mathrm{I, DCC}} 
\approx&  
\epsilon_0P_0^{-1}+\underset{\Delta_{\mathbb{P}_0^{\mathrm{I, DCC}} }}{\underbrace{\epsilon_0P_0^{-1}\bar{P}\left( \frac{\left(1-e^{-M\tau}\right)}{M} - e^{-M\tau}\tau\right)}} .
\end{align}
\end{lemma}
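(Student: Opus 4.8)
The plan is to start directly from the large-$P_0$ expression \eqref{Q6x3}, at which point $\mathbb{P}_0^{\mathrm{I, DCC}}$ has already been reduced to the sum of three explicit terms: the baseline $\epsilon_0P_0^{-1}$, the correction $-e^{-M\tau}\epsilon_0P_0^{-1}\bar{P}\tau$, and a single fractional term inherited from the closed form \eqref{Q6xs} of $Q_2$. The only work remaining to establish the lemma is to extract the leading-order behaviour of that fraction as $P_0\rightarrow\infty$ and then collect the resulting terms.

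First I would regroup the numerator of the fraction by powers of $P_0^{-1}$, writing
\[
1 -\epsilon_0P_0^{-1}-  e^{-M\tau}\left(1- \epsilon_0P_0^{-1} - \epsilon_0P_0^{-1}\bar{P}\tau \right) = \left(1-e^{-M\tau}\right) + \epsilon_0\left(e^{-M\tau}+e^{-M\tau}\bar{P}\tau-1\right)P_0^{-1},
\]
so that the numerator is a constant term plus an $O(P_0^{-1})$ correction, while the denominator $1+M\bar{P}^{-1}\epsilon_0^{-1} P_0$ is dominated by its linear term $M\bar{P}^{-1}\epsilon_0^{-1}P_0$. Since the denominator grows like $P_0$, the entire fraction is of order $P_0^{-1}$, and dividing the constant numerator term $\left(1-e^{-M\tau}\right)$ by the leading denominator term gives $\epsilon_0P_0^{-1}\bar{P}\frac{1-e^{-M\tau}}{M}$; both the unit term in the denominator and the $O(P_0^{-1})$ correction in the numerator contribute only at order $P_0^{-2}$ and are discarded at this level of accuracy.

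Substituting this leading contribution back into \eqref{Q6x3} yields
\[
\mathbb{P}_0^{\mathrm{I, DCC}}\approx \epsilon_0P_0^{-1} - e^{-M\tau}\epsilon_0P_0^{-1}\bar{P}\tau + \epsilon_0P_0^{-1}\bar{P}\frac{1-e^{-M\tau}}{M},
\]
and factoring $\epsilon_0P_0^{-1}\bar{P}$ out of the final two terms reproduces \eqref{idcc} exactly. The one delicate point, and the step I would justify most carefully, is the asymptotic bookkeeping for the fraction: one must confirm that keeping only the constant part of the numerator over the linear part of the denominator captures the full $P_0^{-1}$ behaviour and that every discarded cross-term is genuinely $O(P_0^{-2})$. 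Everything else is routine, since the heavy lifting---the closed form \eqref{Q6xs} and its first expansion in \eqref{Q6x1}--\eqref{Q6x2}---has already been completed before the lemma is stated.
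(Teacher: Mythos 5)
Your proposal is correct and follows essentially the same route as the paper: the paper derives \eqref{idcc} from \eqref{Q6x3} via unstated ``algebraic manipulations,'' and your work---regrouping the numerator by powers of $P_0^{-1}$, replacing the denominator $1+M\bar{P}^{-1}\epsilon_0^{-1}P_0$ by its leading term, and verifying the discarded cross-terms are $O(P_0^{-2})$---is precisely that manipulation, made explicit.
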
 

{\it Remark 6:} From the   asymptotic result in \eqref{idcc}, one can observe  that the difference between the outage probabilities for the cases where   the grant-free user is and is not admitted to $\text{B}_0$, $\Delta_{\mathbb{P}_0^{\mathrm{I, DCC}} }$, approaches zero as $M$ approaches infinity. This is different from the previous case considered  in \eqref{approximation1}, where increasing $M$ deteriorates the outage probability of $\text{U}_0$.   In other words, the use of distributed contention control can effectively ensure that the performance of  $\text{U}_0$ is guaranteed even if there are many grant-free users. We note, however, that the use of distributed contention control results in a higher system overhead than the open-loop scheme. 

To analyze the performance of the selected grant-free user,  assume that the number of users whose  channel gains are below the threshold is fixed and  denoted by $N$, $N>0$, i.e., the number of users which are qualified to participate in contention is assumed to be fixed, and the unordered channel gains of these users follow the distribution in  \eqref{fhj}. Therefore,  the outage probability of the selected grant-free user is given by
\begin{align}\label{eq fig4}
\mathbb{P}_1^{\mathrm{I, DCC}}  =& 1- \mathbb{P}\left(\log\left(1+\frac{|h_0|^2P_0}{ |h_{(1)}|^2\bar{P}+1}\right)>R_0, \log\left(1+ |h_{(1)}|^2\bar{P}\right) >R_i \right) \\\nonumber=&
1- \mathbb{P}\left( |h_{(1)}|^2<\bar{P}^{-1}\epsilon_0^{-1}P_0|h_0|^2-\bar{P}^{-1},    |h_{(1)}|^2  > \bar{P}^{-1}\epsilon_i,|h_{(1)}|^2<\tau \right) ,
\end{align}
where   it is assumed that  all  grant-free users  have the same  target data rate, denoted by $R_i$. 
If $\bar{P}^{-1}\epsilon_i\geq \tau$, $\mathbb{P}_1^{\mathrm{I, DCC}}  =1$. This is due to the fact that the user is selected because  its channel gain is   smaller than $\tau$, i.e., $|h_{(1)}|^2<\tau$. On the other hand, $\bar{P}^{-1}\epsilon_i$ is the target SNR of the user since $\log(1+|h_{(1)}|^2\bar{P})>R_i$ means $|h_{(1)}|^2>\bar{P}^{-1}\epsilon_i$.  If $\bar{P}^{-1}\epsilon_i\geq \tau$,  the user's target SNR will never be met.   If $\bar{P}^{-1}\epsilon_i< \tau$, 
the outage probability can be further rewritten as follows:
\begin{align}
\mathbb{P}_1^{\mathrm{I, DCC}}  = &
1- \mathbb{P}\left( \bar{P}^{-1}\epsilon_0^{-1}P_0|h_0|^2-\bar{P}^{-1}>0,  |h_{(1)}|^2<\bar{P}^{-1}\epsilon_0^{-1}P_0|h_0|^2-\bar{P}^{-1},\right.\\\nonumber&\left.   |h_{(1)}|^2  > \bar{P}^{-1}\epsilon_i,|h_{(1)}|^2<\tau \right) \\\nonumber
= &
1- \mathbb{P}\left(  \epsilon_0^{-1}P_0|h_0|^2-1>\bar{P}\tau,   \bar{P}^{-1}\epsilon_i< |h_{(1)}|^2<\tau \right) \\\nonumber &-\mathbb{P}\left( 0<\bar{P}^{-1}\epsilon_0^{-1}P_0|h_0|^2-\bar{P}^{-1}<\tau, \bar{P}^{-1}\epsilon_i<|h_{(1)}|^2<\bar{P}^{-1}\epsilon_0^{-1}P_0|h_0|^2-\bar{P}^{-1}  \right). 
\end{align}
By applying order statistics \cite{David03} and also treating $|h_{(1)}|^2$ as the smallest value among $N$ i.i.d. random variables following      the distribution in  \eqref{fhj}, the outage probability of the selected grant-free user can be expressed as follows: 
\begin{align}\nonumber
\mathbb{P}_1^{\mathrm{I, DCC}}  = & 1- e^{-\epsilon_0P_0^{-1}(1+\bar{P}\tau)}\frac{\left(e^{-\bar{P}^{-1}\epsilon_i}-e^{-\tau}\right)^N}{(1-e^{-\tau})^N} -\frac{\left(e^{-\bar{P}^{-1}\epsilon_i}-e^{-\tau}\right)^N}{(1-e^{-\tau})^N} \left(e^{-\epsilon_0P_0^{-1}}-e^{-\epsilon_0P_0^{-1}(1+\bar{P}\tau)}\right)\\\nonumber&+
\frac{\left(e^{-(\bar{P}^{-1}\epsilon_0^{-1}P_0x-\bar{P}^{-1} )}-e^{-\tau}\right)^N}{(1-e^{-\tau})^N}
  \frac{e^{-(1+k\bar{P}^{-1}\epsilon_0^{-1}P_0)\epsilon_0P_0^{-1} } -e^{-(1+k\bar{P}^{-1}\epsilon_0^{-1}P_0)\epsilon_0P_0^{-1}(1+\bar{P}\tau) }}{(1+k\bar{P}^{-1}\epsilon_0^{-1}P_0)}
\\ 
= & 1-  \frac{\left(e^{-\bar{P}^{-1}\epsilon_i}-e^{-\tau}\right)^N}{(1-e^{-\tau})^N}  e^{-\epsilon_0P_0^{-1}}  +
\sum^{N}_{k=0} \frac{{N\choose k} (-1)^{N-k}e^{k\bar{P}^{-1}-\tau(N-k)} }{(1-e^{-\tau})^N}\\\nonumber
&\times \frac{e^{-(\epsilon_0P_0^{-1} +k\bar{P}^{-1} )} -e^{-(\epsilon_0P_0^{-1}+k\bar{P}^{-1} )(1+\bar{P}\tau) }}{(1+k\bar{P}^{-1}\epsilon_0^{-1}P_0)}.
\end{align}
{\it Remark 7:} An error floor exists for  $\mathbb{P}_1^{\mathrm{I, DCC}}  $, as explained in the following. By increasing $P_0$ and fixing $\bar{P}$, the probability for   event, $\log\left(1+\frac{|h_0|^2P_0}{ |h_{(1)}|^2\bar{P}+1}\right)<R_0$, goes to zero, but not that for   event, $\log\left(1+ |h_{(1)}|^2\bar{P}\right) <R_i $.  By fixing $P_0$ and increasing $\bar{P}$, the probability for   event, $\log\left(1+\frac{|h_0|^2P_0}{ |h_{(1)}|^2\bar{P}+1}\right)<R_0$, approaches one, whereas the probability for   event, $\log\left(1+ |h_{(1)}|^2\bar{P}\right) <R_i $, goes to zero.  Note that  if $P_0=\bar{P}\rightarrow \infty$, the probability can be approximated as follows:
\begin{align}\label{remark 5sx}
\mathbb{P}_1^{\mathrm{I, DCC}}  
\rightarrow &  
\sum^{N}_{k=0} \frac{{N\choose k} (-1)^{N-k}e^{ -\tau(N-k)} }{(1-e^{-\tau})^N} \frac{1 -e^{-(\epsilon_0+k ) \tau }}{(1+k\epsilon_0^{-1})},
\end{align}
which is no longer a function of the transmit powers.

\section{Semi-Grant-Free Protocol - Type II}\label{section sic2}
In this section, we assume  that   $\textrm{U}_0$'s message is decoded  in the last stage of SIC. The corresponding design of the two proposed SGF transmission schemes  is described in the following two subsections. 

\subsection{Open-Loop Contention Control}
Since  $\textrm{U}_0$'s message is decoded after the messages from the grant-free users are decoded at the base station,    grant-free users with strong channel conditions should be scheduled, which means the following change to the contention control mechanism. Upon receiving the threshold $\tau$ from the base station, only   users whose channel gains are above the threshold will participate in the SGF transmission. Assume  again that $N$ users are selected, without loss of generality. 

\subsubsection{Impact of SGF transmission on $\text{U}_0$}
Since $\textrm{U}_0$'s message is decoded in the last stage of SIC, $\textrm{U}_0$'s outage performance   also depends on whether   the messages from the $N$ grant-free users can be correctly decoded at the base station.  Therefore, the outage probability experienced by $\text{U}_0$ can be expressed as follows:
\begin{align}\label{Pxx}
\mathbb{P}_0^{\mathrm{II, OL}}  = &1 - \sum^{M}_{n=1}\mathbb{P}\left(N=n\right)Q_3 -\mathbb{P}\left(N=0\right)  \mathbb{P}\left(\log\left( 1+|h_0|^2P_0 \right)>R_0\right),
\end{align}
where 
\begin{align}
Q_3=&\mathbb{P}\left(\log\left( 1+|h_0|^2P_0 \right)>R_0,  \left. \log\left(1+\frac{\bar{P}\sum^{M}_{j=M-n+1} |h_{(j)}|^2}{1+|h_0|^2P_0}\right)>R_{sum,n}\right|N=n\right),
\end{align}
and   $R_{sum,n}$ is the target sum rate of the $n$ grant-free users. Without loss of generality, again assume that the grant-free users have the same target data rate,   $R_i$, and    $R_{sum,n}=nR_i$. 
It is worth pointing out that a sum-rate based criterion for successful SIC decoding is used in \eqref{Pxx}. Otherwise, the outage events of the $n$ consecutive  SIC stages need to be taken into consideration, which makes the evaluation difficult due to the correlation between the signal-to-interference-plus-noise ratios (SINRs) in the different  SIC stages. 

The following theorem provides a closed-form expression for the outage probability of $\text{U}_0$. 
\begin{theorem}\label{theorem2}
The outage probability of $\text{U}_0$ achieved by the proposed open-loop Type  II SGF protocol can be expressed as follows: 
\begin{align} \label{theorem2 o}
&\mathbb{P}_0^{\mathrm{II, OL}}  = 1 - \sum^{M}_{n=1}\frac{M!}{n!(M-n)!}e^{-n\tau}\left(1-e^{-\tau}\right)^{M-n} \\\nonumber &\times  \left(\sum^{n-1}_{l=1}\frac{\epsilon_{s,n}^l\bar{P}^{-l}P_0^l}{l!}e^{-\tau_n} \frac{\Gamma\left(l+1, (\bar{\tau}_n-\tau_n)(1+\epsilon_{s,n}\bar{P}^{-1}P_0)\right)}{(1+\epsilon_{s,n}\bar{P}^{-1}P_0)^{l+1}}\right.\\\nonumber &\left.+ e^{-\epsilon_0P_0^{-1}}-e^{-\bar{\tau}_n}  + \frac{  e^{- \tau_n- (\bar{\tau}_n-\tau_n)(1+\epsilon_{s,n}\bar{P}^{-1}P_0) }}{1+\epsilon_{s,n}\bar{P}^{-1}P_0}\right) -\left(1-e^{-\tau}\right)^M e^{-\epsilon_0P_0^{-1}},
\end{align}
where $\epsilon_{s,n}=2^{R_{sum,n}}-1$, $\tau_n=\frac{n\tau\epsilon_{s,n}^{-1}\bar{P}-1}{P_0}$, and $\bar{\tau}_n=\max\left(\tau_n, \epsilon_0P_0^{-1}\right)$. 
\end{theorem}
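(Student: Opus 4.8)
The plan is to decompose $\mathbb{P}_0^{\mathrm{II,OL}}$ according to the number $N$ of grant-free users admitted, which under Type II open-loop control is the number of users whose gains exceed $\tau$. Since the $M$ gains $|h_m|^2$ are i.i.d.\ unit-mean exponentials, each exceeds $\tau$ independently with probability $e^{-\tau}$, so $N$ is binomial and
\[
\mathbb{P}(N=n) = \frac{M!}{n!(M-n)!}e^{-n\tau}\left(1-e^{-\tau}\right)^{M-n},
\]
which already supplies the coefficient in \eqref{theorem2 o}; in particular $\mathbb{P}(N=0)=(1-e^{-\tau})^M$. Together with $\mathbb{P}(\log(1+|h_0|^2P_0)>R_0)=e^{-\epsilon_0 P_0^{-1}}$, the $N=0$ term in \eqref{Pxx} immediately yields the trailing $-(1-e^{-\tau})^M e^{-\epsilon_0 P_0^{-1}}$. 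The substance of the proof is therefore the evaluation of $Q_3$, where the sum-rate criterion $R_{sum,n}=nR_i$ is what lets us collapse the $n$ consecutive SIC stages into the single condition appearing in $Q_3$.

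Next I would pin down the conditional law of $S\triangleq\sum_{j=M-n+1}^{M}|h_{(j)}|^2$ given $N=n$. Conditioned on $N=n$, the $n$ gains lying above $\tau$ are, by the memorylessness of the exponential distribution, i.i.d.\ with the shifted law $\tau+\mathrm{Exp}(1)$; hence $S-n\tau$ is a sum of $n$ i.i.d.\ unit-rate exponentials, i.e.\ an Erlang$(n,1)$ variable. This gives the survival function
\[
\mathbb{P}(S>t\mid N=n)=\begin{cases}1, & t\le n\tau,\\ e^{-(t-n\tau)}\sum_{l=0}^{n-1}\dfrac{(t-n\tau)^l}{l!}, & t> n\tau.\end{cases}
\]
Because $\mathrm{U}_0$'s channel is independent of the grant-free users' channels, $|h_0|^2$ remains unit-mean exponential and independent of $S$ after conditioning on $N=n$. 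Writing the two events in $Q_3$ as $|h_0|^2>\epsilon_0 P_0^{-1}$ and $S>\epsilon_{s,n}\bar{P}^{-1}(1+|h_0|^2P_0)$, I would express $Q_3$ as the single integral
\[
Q_3=\int_{\epsilon_0 P_0^{-1}}^{\infty} e^{-x}\,\mathbb{P}\!\left(S>\epsilon_{s,n}\bar{P}^{-1}(1+xP_0)\,\middle|\,N=n\right)dx.
\]

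The remaining work is to carry out this integral, and the crossover between the two branches of the survival function is exactly what produces $\tau_n$ and $\bar{\tau}_n$. The threshold $\epsilon_{s,n}\bar{P}^{-1}(1+xP_0)$ equals the minimal value $n\tau$ of $S$ precisely at $x=\tau_n=\frac{n\tau\epsilon_{s,n}^{-1}\bar{P}-1}{P_0}$, so on $x<\tau_n$ the survival probability is $1$ and on $x>\tau_n$ it is the Erlang tail. Splitting the range at $\bar{\tau}_n=\max(\tau_n,\epsilon_0 P_0^{-1})$ (the max handling whether $\tau_n$ falls below the lower limit $\epsilon_0 P_0^{-1}$), the flat part contributes $\int_{\epsilon_0 P_0^{-1}}^{\bar{\tau}_n} e^{-x}dx=e^{-\epsilon_0 P_0^{-1}}-e^{-\bar{\tau}_n}$, matching two of the terms in \eqref{theorem2 o}.

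For the tail part I would exploit the identity $t-n\tau=\epsilon_{s,n}\bar{P}^{-1}P_0\,(x-\tau_n)$, substitute $w=x-\tau_n$, and apply $\int_{c}^{\infty}w^l e^{-(1+a)w}dw=\Gamma(l+1,(1+a)c)/(1+a)^{l+1}$ with $a=\epsilon_{s,n}\bar{P}^{-1}P_0$ and $c=\bar{\tau}_n-\tau_n$; pulling the $l=0$ summand out via $\Gamma(1,z)=e^{-z}$ reproduces the final displayed term, while $l=1,\dots,n-1$ give the incomplete-gamma sum. Assembling $Q_3$ and inserting it together with $\mathbb{P}(N=n)$ into \eqref{Pxx} yields \eqref{theorem2 o}. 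The main obstacle is the bookkeeping around the two-regime split: one must correctly argue that the lower integration limit and the crossover interact through $\bar{\tau}_n$, and verify that the shifted-Erlang tail's polynomial factor integrates cleanly against $e^{-x}$ into the stated incomplete-gamma form; the conditional shifted-Erlang characterization of $S$ is the technical hinge on which the whole computation turns.
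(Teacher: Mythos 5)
Your proposal is correct and follows essentially the same route as the paper's proof: a binomial decomposition over $N=n$, characterization of the sum of the admitted users' gains (conditioned on $N=n$) as a shifted Erlang variable, and the same two-regime split of the integral over $|h_0|^2$ at $\bar{\tau}_n$ yielding the flat contribution $e^{-\epsilon_0P_0^{-1}}-e^{-\bar{\tau}_n}$ and the incomplete-Gamma terms. The only cosmetic difference is that you obtain the shifted-Erlang law directly from memorylessness of the exponential, whereas the paper derives it by taking Laplace transforms of the truncated exponential pdf; the two derivations are equivalent.
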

\begin{proof}
See Appendix \ref{prooftheorem2}. 
\end{proof}
In order to obtain   insightful analytical results, an asymptotic study is carried out in the following. When $P_0$ and $\tau$ are fixed and $\bar{P}\rightarrow \infty$, the incomplete Gamma function in \eqref{theorem2 o} can be approximated as follows:
\begin{align}\nonumber
\Gamma\left(l+1, (\bar{\tau}_n-\tau_n)(1+\epsilon_{s,n}\bar{P}^{-1}P_0)\right) &=l!e^{-(\bar{\tau}_n-\tau_n)(1+\epsilon_{s,n}\bar{P}^{-1}P_0)}\sum^{l}_{m=0}\frac{(\bar{\tau}_n-\tau_n)^m(1+\epsilon_{s,n}\bar{P}^{-1}P_0)^m}{m!}\\  \approx&l!e^{-(\bar{\tau}_n-\tau_n) }\sum^{l}_{m=0}\frac{(\bar{\tau}_n-\tau_n)^m }{m!}.
\end{align}
Therefore, the expression for $\mathbb{P}_0^{\mathrm{II, OL}} $ can be approximated as follows:
\begin{align}\nonumber 
\mathbb{P}_0^{\mathrm{II, OL}}  \approx& 1 - \sum^{M}_{n=1}\frac{M!}{n!(M-n)!}e^{-n\tau}\left(1-e^{-\tau}\right)^{M-n}   \left(\sum^{n-1}_{l=1}\epsilon_{s,n}^l\bar{P}^{-l}P_0^l  e^{-\bar{\tau}_n  }\sum^{l}_{m=0}\frac{(\bar{\tau}_n-\tau_n)^m }{m!}  + e^{-\epsilon_0P_0^{-1}} \right)\\\nonumber &-\left(1-e^{-\tau}\right)^M e^{-\epsilon_0P_0^{-1}}\\ 
\approx& 1 -e^{-\epsilon_0P_0^{-1}}+ \bar{P}^{-1}P_0  \sum^{M}_{n=1}\frac{M!}{n!(M-n)!}e^{-n\tau} \epsilon_{s,n}    \left(1-e^{-\tau}\right)^{M-n} e^{-\bar{\tau}_n  }\left( 1 +\bar{\tau}_n-\tau_n\right)    . \label{theorem2 ox} 
\end{align}
It is important to point out that $\epsilon_{s,n}$ is also a function of $n$ since $R_{sum,n}=nR_i$. 
We note that $\bar{\tau}_n= \tau_n $ since $\tau_n>\epsilon_0P_0^{-1}$ for $\bar{P}\rightarrow \infty$.  Therefore, $\mathbb{P}_0^{\mathrm{II, OL}} $ can be approximated as follows:
\begin{align}  
\mathbb{P}_0^{\mathrm{II, OL}}   
\approx& 1 -e^{-\epsilon_0P_0^{-1}}+ \bar{P}^{-1}P_0  \sum^{M}_{n=1}\frac{M!}{n!(M-n)!}e^{-n\tau}    \left(1-e^{-\tau}\right)^{M-n}   \epsilon_{s,n} e^{-\frac{n\tau\epsilon_{s,n}^{-1}\bar{P}-1}{P_0}  }  
\\\nonumber
=& 1 -e^{-\epsilon_0P_0^{-1}}+ e^{\frac{1}{P_0}  } \bar{P}^{-1}P_0  \sum^{M}_{n=1}\frac{M!}{n!(M-n)!}e^{-n\tau}   \left(1-e^{-\tau}\right)^{M-n}   \epsilon_{s,n} \left(e^{-\frac{\tau\epsilon_{s,n}^{-1}\bar{P}}{P_0}  }  \right)^n.
\end{align}
Since $\bar{P}\rightarrow \infty$, the approximation of $\mathbb{P}_0^{\mathrm{II, OL}} $ can be further simplified as follows:
\begin{align}  \label{approximation type 2}
\mathbb{P}_0^{\mathrm{II, OL}}   
\approx&  1 -e^{-\epsilon_0P_0^{-1}}+ \underset{\Delta{\mathbb{P}_0^{\mathrm{II, OL}} }}{\underbrace{\frac{e^{\frac{1}{P_0}  } P_0   Me^{-\tau}   \left(1-e^{-\tau}\right)^{M-1}   \epsilon_{s,1}     }{\bar{P}e^{\frac{\tau\epsilon_{s,1}^{-1}\bar{P}}{P_0}  }}}}.
\end{align}

{\it Remark 8:} The approximation in \eqref{approximation type 2} shows that the difference between the outage probabilities for the cases where  grant-free users are and are not admitted to $\mathrm{B}_0$, $\Delta{\mathbb{P}_0^{\mathrm{II, OL}} }$,  approaches zero, as $\bar{P}$ is increaseed and   $P_0$ is fixed. This is different from the behaviour of  the proposed Type I SGF  protocol, where the difference is reduced to zero if $P_0$ is increased and   $\bar{P}$ is fixed. This difference is due to the different SIC decoding orders employed  by the two protocols. 

{\it Remark 9:}  Compared to the case considered in Remark 2,      the case    with  small $P_0$ and  large $\bar{P}$  represents another important uplink scenario, where the grant-based user is a cell-edge user and the grant-free users  are close to the base station. The proposed Type II SGF protocol is ideally suited for such uplink communication scenarios, as  massive connectivity can be effectively supported and  the QoS requirement of the grant-based user can   be strictly guaranteed. 

{\it Remark 10:} Following   steps similar to \eqref{approximation type 2}, one can also show that the outage probability difference  $\Delta{\mathbb{P}_0^{\mathrm{II, OL}} }$ is reduced to zero by increasing $\tau$, instead of reducing $\tau$ which was considered in Remark 1.


\subsubsection{Impact of SGF transmission on grant-free  users}
In order to analyze the impact of the proposed protocol on the grant-free   users' data rates, in this section, we assume   that $N$ grant-free  users have been selected. 
Similar to \eqref{rate set}, it is assumed that each user can adapt its transmit data rate as follows: 
\begin{align} \label{taxx2}
 \log\left(1 +\frac{|h_{(i)}|^2\bar{P}}{\sum^{i-1}_{j=M-N+1}|h_{(j)}|^2\bar{P}+|h_0|^2P_0+1}\right), 
 \end{align}
where $(M-N+1)\leq i\leq M$.  We focus on   the probability for  a  grant-free  user to successfully send $L$ bits  to the base station within $\mathrm{B}_0$, which is obtained as follows:
\begin{align}\label{Pi3}
\mathbb{P}_i^{\mathrm{II, OL}}  = \mathbb{P}\left(\mathrm{B}_0\log\left(1 +\frac{|h_{(i)}|^2\bar{P}}{\underset{j=M-n+1}{\sum^{i-1}} |h_{(j)}|^2\bar{P}+|h_0|^2P_0+1}\right)
<L\right),
\end{align}
 where the expression for the trivial case with   $i=M-N+1$ can be obtained similarly. 
 Probability $\mathbb{P}_i^{\mathrm{II, OL}} $ can be rewritten as follows:
\begin{align} \nonumber
\mathbb{P}_i^{\mathrm{II, OL}}  =& \mathbb{P}\left( \bar{P}\sum^{i-1}_{j=M-N+1}|h_{(j)}|^2+P_0|h_0|^2>\breve{\epsilon}_i^{-1}\bar{P}|h_{(i)}|^2-1 \right)\\\label{corela}=& \mathbb{P}\left( \bar{P}\sum^{i-1}_{j=M-N+1}|h_{(j)}|^2+P_0|h_0|^2>\breve{\epsilon}_i^{-1}\bar{P}|h_{(i)}|^2-1 , |h_{(i)}|^2>\breve{\epsilon}_i\bar{P}^{-1} \right) \\\nonumber &+\mathbb{P}\left(|h_{(i)}|^2<\breve{\epsilon}_i\bar{P}^{-1} \right). 
\end{align}
 Compared  to \eqref{Pidfd}, the probability in \eqref{corela} is more difficult to evaluate as  there are three random variables,  $|h_{0}|^2 $, $|h_{(i)}|^2 $, and $\sum^{i-1}_{j=M-N+1}|h_{(j)}|^2$, involved in the expression. The fact that    $|h_{(i)}|^2 $ and $\sum^{i-1}_{j=M-N+1}|h_{(j)}|^2$ are correlated makes the analysis more challenging. Therefore, we rely on computer simulations for the performance analysis, see Section  \ref{Section simulation}. 
 
  \subsection{Distributed Contention Control}  
After the base station broadcasts the threshold $\tau$,  assume that there are $N$ users whose channel gains  are stronger than  $\tau$ and only  these $N$ users are allowed to participate in distributed contention control. We note that the unordered channel gains of these users follow the distribution in \eqref{tildeg}.  The   user with the strongest channel gain is selected, instead of the weakest user as for the proposed Type~I SGF protocol.  Therefore, the outage probability of $\text{U}_0$ can be expressed as follows:
\begin{align}\label{Pxxz}
\mathbb{P}_0^{\mathrm{II, DCC}}  = &1 - \sum^{M}_{n=1}\mathbb{P}\left(|h_{(M-n)}|^2<\tau, |h_{(M-n+1)}|^2>\tau\right)Q_4\\\nonumber &-\mathbb{P}\left(|h_{(M)}|^2<\tau\right)  \mathbb{P}\left(\log\left( 1+|h_0|^2P_0 \right)>R_0\right),
\end{align}
where again it is assumed that all  grant-free  users use the same target data rate, $R_i$, and 
\begin{align}
Q_4=&\mathbb{P}\left(\log\left( 1+|h_0|^2P_0 \right)>R_0,  \left. \log\left(1+\frac{\bar{P}  |h_{(M)}|^2}{1+|h_0|^2P_0}\right)>R_{i}\right|N=n\right).
\end{align}
  $\mathbb{P}_0^{\mathrm{II, DCC}} $ can be simplified to the following equivalent form:
\begin{align}\label{Pxxza}
\mathbb{P}_0^{\mathrm{II, DCC}}  = &1 -  \underset{\tilde{Q}_{4}}{\underbrace{\mathbb{P}\left( |h_{(M)}|^2>\tau, |h_{(M)}|^2>\bar{P}^{-1}\epsilon_i(1+P_0|h_0|^2), |h_0|^2>P_0^{-1}\epsilon_0 \right)}}\\\nonumber &-\mathbb{P}\left(|h_{(M)}|^2<\tau\right)  \mathbb{P}\left(\log\left( 1+|h_0|^2P_0 \right)>R_0\right),
\end{align}

By analyzing the constraints of  $ |h_{(M)}|^2$ and $ |h_{0}|^2$, $\tilde{Q}_{4}$ can be expressed as follows:
\begin{align}\nonumber 
\tilde{Q}_{4}   =&  \mathbb{P}\left( |h_{(M)}|^2>\bar{P}^{-1}\epsilon_i(1+P_0|h_0|^2),  \bar{P}^{-1}\epsilon_i(1+P_0|h_0|^2)>\tau, |h_0|^2>P_0^{-1}\epsilon_0 \right)\\\nonumber 
&+\mathbb{P}\left(   |h_{(M)}|^2>\tau, \bar{P}^{-1}\epsilon_i(1+P_0|h_0|^2)<\tau, |h_0|^2>P_0^{-1}\epsilon_0 \right).
\end{align}
With some algebraic manipulations,   $\tilde{Q}_{4}$ can be evaluated as follows:
 \begin{align}  \nonumber
\tilde{Q}_{4} =&    e^{-\theta_h} - \sum^{M}_{k=0}{M\choose k}(-1)^k  e^{-k \bar{P}^{-1}\epsilon_i }\frac{e^{-(1+k \bar{P}^{-1}\epsilon_i P_0) \theta_h}}{(1+k \bar{P}^{-1}\epsilon_i P_0)}\\  &+ \left[1-\left(1-e^{-\tau}\right)^M \right]\left(e^{-P_0^{-1}\epsilon_0}-e^{-(P_0^{-1}\bar{P}\epsilon_i^{-1}\tau-P_0^{-1})}\right),
\end{align}
if $P_0^{-1}\bar{P}\epsilon_i^{-1}\tau-P_0^{-1}>P_0^{-1}\epsilon_0$, otherwise
 \begin{align}    
\tilde{Q}_{4} =&    e^{-\theta_h} - \sum^{M}_{k=0}{M\choose k}(-1)^k  e^{-k \bar{P}^{-1}\epsilon_i }\frac{e^{-(1+k \bar{P}^{-1}\epsilon_i P_0) \theta_h}}{(1+k \bar{P}^{-1}\epsilon_i P_0)},
\end{align}
where $\theta_h = \max(P_0^{-1}\epsilon_0, P_0^{-1}\bar{P}\epsilon_i^{-1}\tau-P_0^{-1})$. By substituting the expressions for $\tilde{Q}_{4}$ in \eqref{Pxxza} and   also using the fact that $\mathbb{P}\left(|h_{(M)}|^2<\tau\right)  \mathbb{P}\left(\log\left( 1+|h_0|^2P_0 \right)>R_0\right)=\left(1-e^{-\tau}\right)^Me^{-\epsilon_0P_0^{-1}}$, the outage probability $\mathbb{P}_0^{\mathrm{II, DCC}} $ can be obtained for the case with distributed contention control.

On the other hand, to analyze the performance of the selected grant-free  user,  assume that there is a fixed number of grant-free users, denoted by $N$, whose channel gains are above the threshold. Therefore, conditioned on $N$, the outage probability of the selected grant-free  user is given by
\begin{align}
\mathbb{P}_N^{\mathrm{II, DCC}}  =& \mathbb{P}\left(\log\left(1+\frac{\bar{P}  |h_{(N)}|^2}{1+|h_0|^2P_0}\right)<R_{i} \right).
\end{align} 
Note that $|h_{(N)}|^2>\tau$, which means that $\mathbb{P}_N^{\mathrm{II, DCC}} $ can be evaluated as follows:
\begin{align}  
\mathbb{P}_N^{\mathrm{II, DCC}}  =&  \mathbb{P}\left( |h_{(N)}|^2<\bar{P}^{-1}\epsilon_i(1+P_0|h_0|^2)  \right) \\\nonumber
=&\mathbb{P}\left( |h_{(N)}|^2<\bar{P}^{-1}\epsilon_i(1+P_0|h_0|^2) , \bar{P}^{-1}\epsilon_i(1+P_0|h_0|^2) >\tau \right) . 
\end{align}
Following   steps similar to those in the proof for Theorem~\ref{theorem2},  $\mathbb{P}_N^{\mathrm{II, DCC}}  $ can be expressed in closed form   as follows: 
\begin{align}  
\mathbb{P}_N^{\mathrm{II, DCC}}  =&  \sum^{N}_{k=0}{N \choose k} (-1)^ke^{k\tau} e^{-k \bar{P}^{-1}\epsilon_i}  \frac{e^{-(1+k \bar{P}^{-1}\epsilon_iP_0)  (P_0^{-1}\bar{P}\epsilon_i^{-1}\tau-P_0^{-1})^+}}{1+k \bar{P}^{-1}\epsilon_iP_0}
,
\end{align}
where recall $(x)^+\triangleq \max(0, x)$. 

{\it Remark 11:} Recall that the use of the proposed Type I SGF protocol with distributed contention control yields an error floor for the outage probability of the selected grant-free  user. This error floor does not exist for the Type II SGF protocol as explained in the following. Take the case where   $P_0$ and $\tau$ are fixed and $\bar{P}\rightarrow \infty$ as an example. One can find that  $\mathbb{P}_N^{\mathrm{II, DCC}} \rightarrow 0$ since
\begin{align}  
\mathbb{P}_N^{\mathrm{II, DCC}}  \rightarrow &  \sum^{N}_{k=0}{N \choose k} (-1)^ke^{k\tau} e^{-k \bar{P}^{-1}\epsilon_i}  e^{-P_0^{-1}\bar{P}\epsilon_i^{-1}\tau} 
\\\nonumber \rightarrow 0&  \left(1 -e^{\tau}   \right)^N   e^{-P_0^{-1}\bar{P}\epsilon_i^{-1}\tau}\rightarrow 0.
\end{align}
Note that  $\mathbb{P}_N^{\mathrm{II, DCC}} \rightarrow 0$  also holds when $\bar{P}\rightarrow \infty$, even if $\tau\rightarrow 0$, i.e., all  grant-free  users can participate in the contention.  
 
\section{Numerical Studies}\label{Section simulation}
In this section, the performance of the proposed SGF protocols is evaluated by using computer simulations, where the conventional grant-free and grant-based schemes are used as benchmarks to facilitate performance evaluation. In particular, the grant-free  scheme admits all  $M$ grant-free users to $\text{B}_0$, whereas $\text{B}_0$ is solely occupied by $\text{U}_0$ for the grant-based scheme. 

In Fig.~\ref{fig 1}, the impact of the proposed Type I SGF protocol with open-loop contention control on the outage probability of $\text{U}_0$ is shown as a function of the transmit SNR, $P_0$, where the   noise power is assumed to be normalized.  As can be observed from Fig. \ref{fig 1}, it is possible to ensure that  $\text{U}_0$ communicates  with the base station as if it solely occupied $\text{B}_0$, while additional grant-free users are admitted to $\text{B}_0$.  The difference between the outage probabilities  for the grant-based and the SGF schemes   is insignificant if the transmit power of the grant-free  users is small, i.e., $\bar{P}=0$ dB. This is because      the grant-free  users do not cause strong interference to $\text{U}_0$ if $\bar{P}$ is small. When the transmit power of the grant-free  users is large, i.e.,  $\bar{P}=20$ dB, it is important to reduce the value of   threshold, $\tau$. As such,  fewer  grant-free  users are admitted to $\text{B}_0$, and hence the outage performance of the proposed SGF protocol remains  similar to that of the grant-based scheme, as shown in Fig. \ref{fig1a}. On the other hand, the grant-free scheme results in the worst performance among the three considered  schemes since it admits all grant-free users and hence introduces severe  interference.  The two subfigures in Fig. \ref{fig 1} also demonstrate  that the developed analytical results perfectly match the simulation results, and the gap   between the asymptotic results and the simulation results is reduced for large  $P_0$. 

\begin{figure}[!htp] 
\begin{center}\subfigure[$\tau=0.1$]{\label{fig1b}\includegraphics[width=0.43\textwidth]{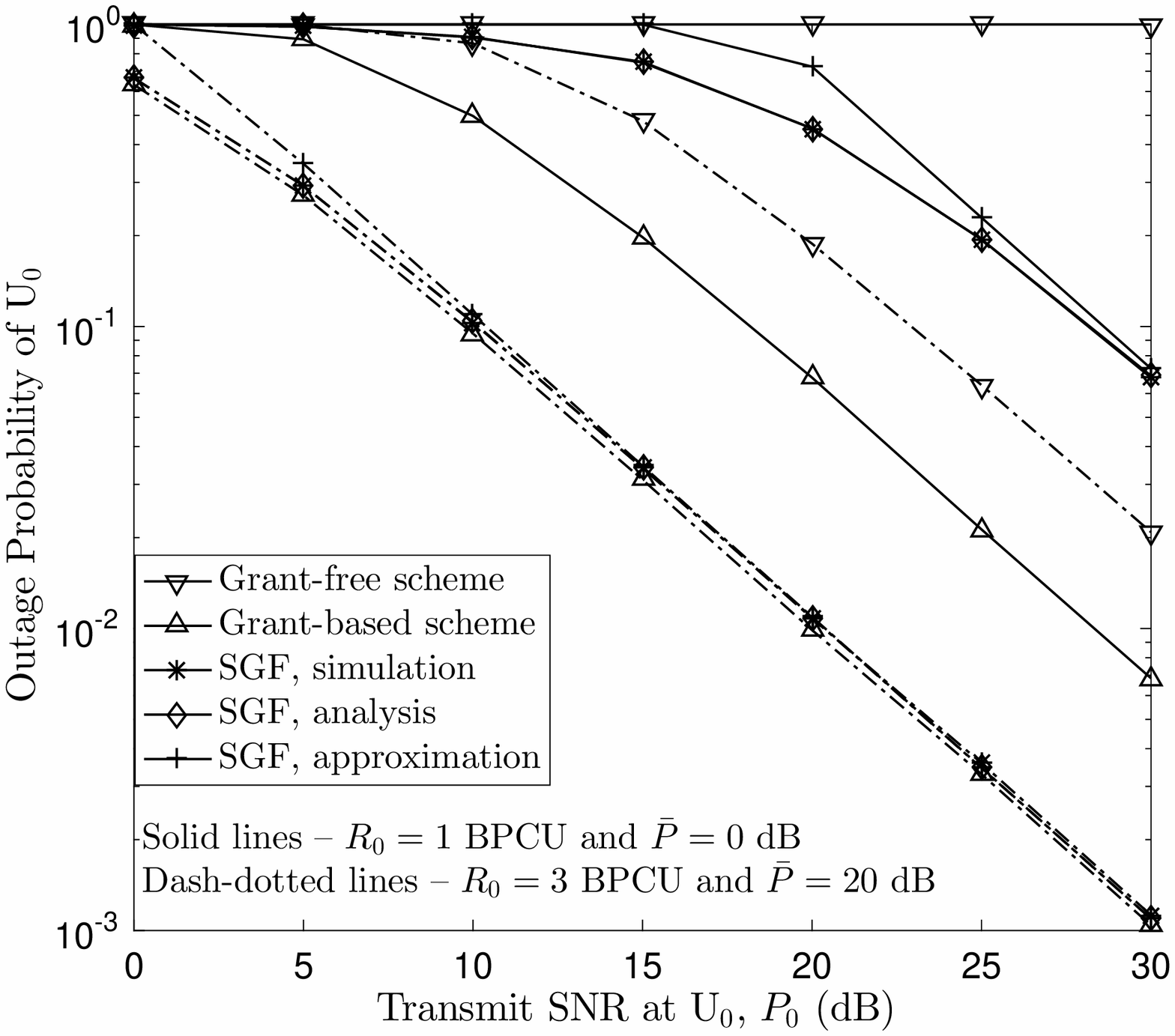}} \subfigure[ $\tau=\frac{1}{P_0}$ ]{\label{fig1a}\includegraphics[width=0.43\textwidth]{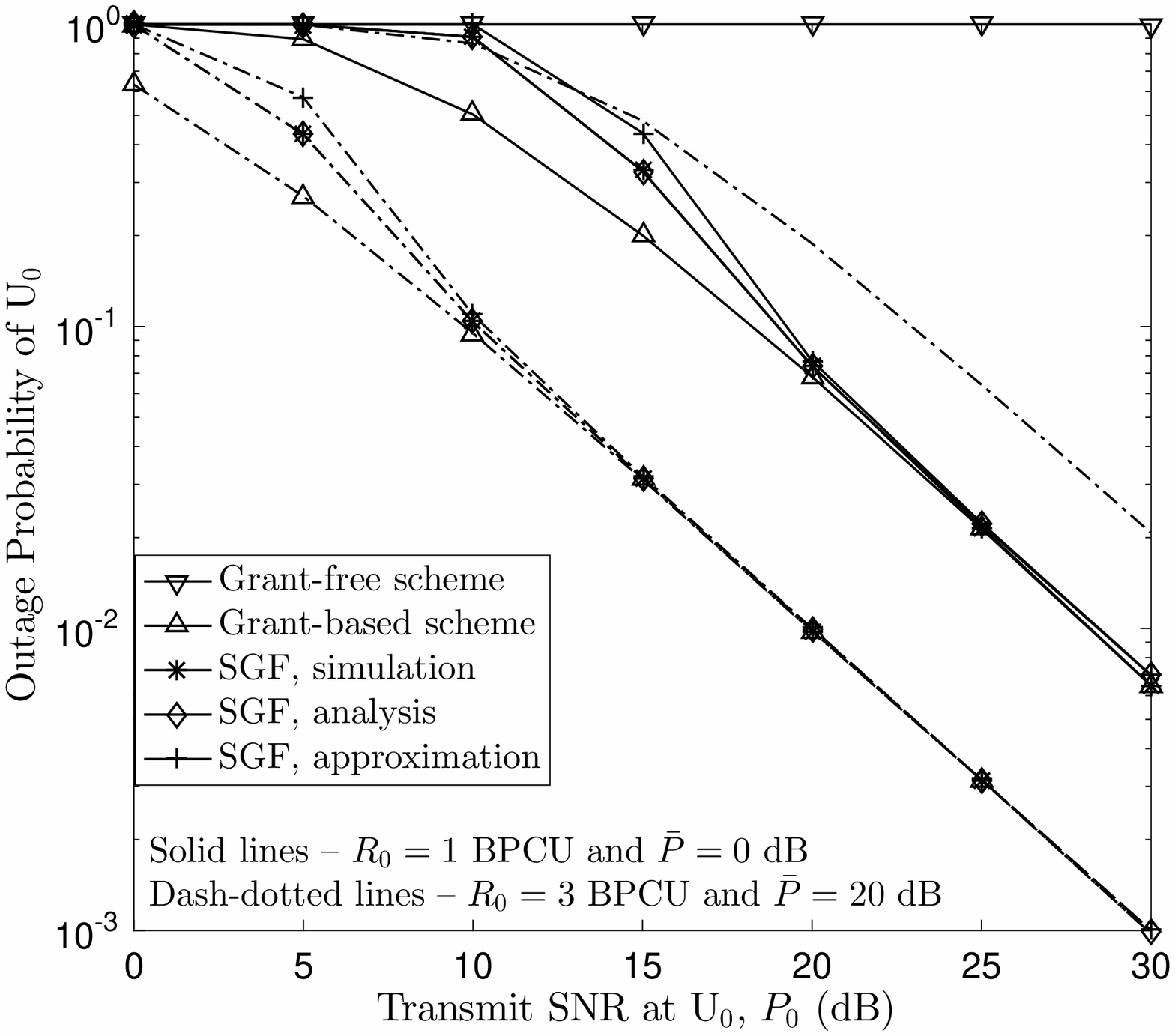}}
\end{center}\vspace{-1em}
 \caption{Impact of Type I open-loop  SGF NOMA transmission  on the outage probability of $\text{U}_0$. $M=20$. BPCU denotes bit per channel use.    }\label{fig 1}
\end{figure}

\begin{figure}[!htbp]\centering \vspace{-1em}
    \epsfig{file=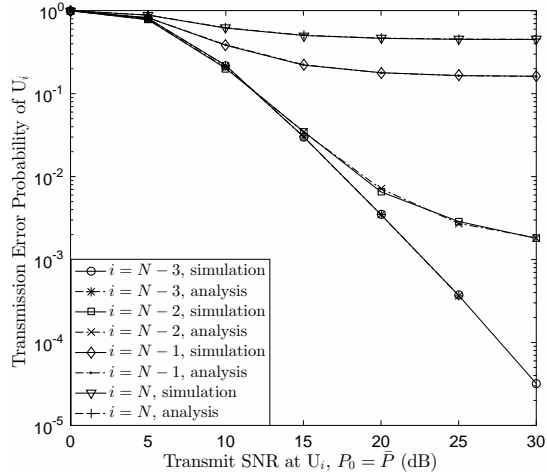, width=0.43\textwidth, clip=}\vspace{-1em}
\caption{ Impact of Type I open-loop  SGF NOMA transmission  on the transmission error probability of $\text{U}_i$. $N=5$, $\tau=0.5$, and $\breve{R}_i=0.6$ BPCU.   \vspace{-1em} }\label{fig2}
\end{figure}

In Fig. \ref{fig2}, the impact of the open-loop Type I SGF protocol   on the grant-free  users' data rates is studied, where it is assumed that there are $N=5$ selected   grant-free  users. The definition of the transmission errors is based on \eqref{Pi}. As can be observed from the figure,    the grant-free  users' transmission error  probabilities exhibit   error floors. This is due to the fact that $\text{U}_i$  is impaired by  the interference from   grant-free  users $\text{U}_j$, $j<i$. These error floors are  reduced by reducing $i$, since $\text{U}_i$ is affected by less interference than $\text{U}_k$, for $i<k$. Fig. \ref{fig2} also demonstrates the accuracy of the developed analytical results. 

\begin{figure}[!htbp]\centering \vspace{-1em}
    \epsfig{file=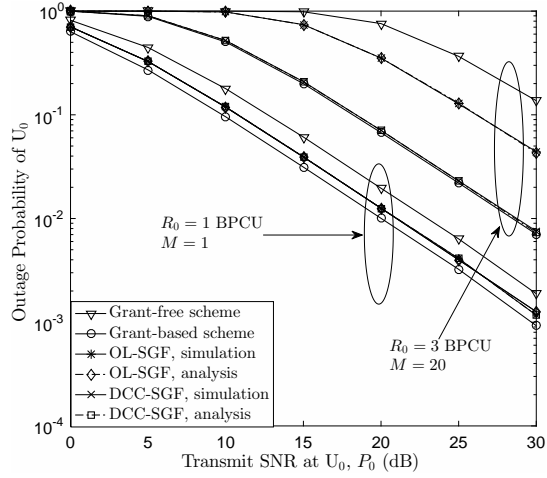, width=0.43\textwidth, clip=}\vspace{-1em}
\caption{ Impact of   Type I SGF NOMA transmission  on the outage probability of $\text{U}_0$ for the open-loop and distributed contention control protocols.  $\tau=1$ and $\bar{P}=0$ dB. OL stands for  open-loop and DCC stands for distributed contention control.  \vspace{-1em} }\label{fig3}
\end{figure}

\begin{figure}[!htbp]\centering \vspace{-1em}
    \epsfig{file=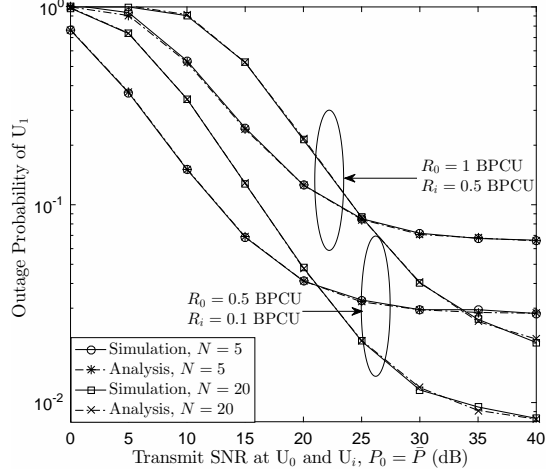, width=0.43\textwidth, clip=}\vspace{-1em}
\caption{ Impact of  Type I  SGF NOMA transmission with distributed contention control on the outage probability of the grant-free  user. $\tau=0.5$.   \vspace{-1em} }\label{fig4}
\end{figure}

In Figs. \ref{fig3} and \ref{fig4}, the performance of the proposed Type I SGF scheme with distributed contention control is evaluated. In Fig. \ref{fig3}, the performance of the Type I SGF scheme is studied  for two contention control protocols, open-loop contention control and  distributed contention control. When $M=1$, there is no difference between the two protocols, which is the reason why the   curves for the two protocols  overlap in the figure. For large  $M$,   distributed contention control ensures that $\text{U}_0$ experiences  almost the  same performance as the grant-based scheme, but   fewer grant-free users are scheduled compared to the open-loop scheme. It is worth pointing out that the grant-free scheme results in the worst performance for both cases. 
Fig. \ref{fig4} demonstrates the impact of the proposed SGF protocol with distributed contention control on the grant-free  user's outage probability. This figure reveals that  there is an error floor, which can be explained as follows. As can be observed from  \eqref{eq fig4}, the outage probability $\mathbb{P}_1^{\mathrm{I, DCC}} $ comprises  two outage events, $ \left(\log\left(1+\frac{|h_0|^2P_0}{ |h_{(1)}|^2\bar{P}+1}\right)>R_0,  \log\left(1+ |h_{(1)}|^2\bar{P}\right) <R_i \right)$ and $ \left(\log\left(1+\frac{|h_0|^2P_0}{ |h_{(1)}|^2\bar{P}+1}\right)<R_0\right)$. The event $ \left(\log\left(1+\frac{|h_0|^2P_0}{ |h_{(1)}|^2\bar{P}+1}\right)<R_0\right)$ is the   cause for the error floor since the SINR becomes a constant when both $P_0$ and $\bar{P}$ approach infinity, as shown in \eqref{remark 5sx}. 
This error floor can be reduced by increasing $N$ and reducing the users' rate requirements.   In particular,   a large $N$ can reduce the error floor because $|h_{(1)}|^2$ is more likely to be small   for large $N$ and hence the event $ \left(\log\left(1+\frac{|h_0|^2P_0}{ |h_{(1)}|^2\bar{P}+1}\right)<R_0\right)$ is less likely to happen. 

\begin{figure}[!htp] 
\begin{center}\subfigure[$\mathbb{P}_0^{\mathrm{II, OL}} $ for  $R_i=0.5$ BPCU and $R_0=2R_i$. ]{\label{fig5b}\includegraphics[width=0.43\textwidth]{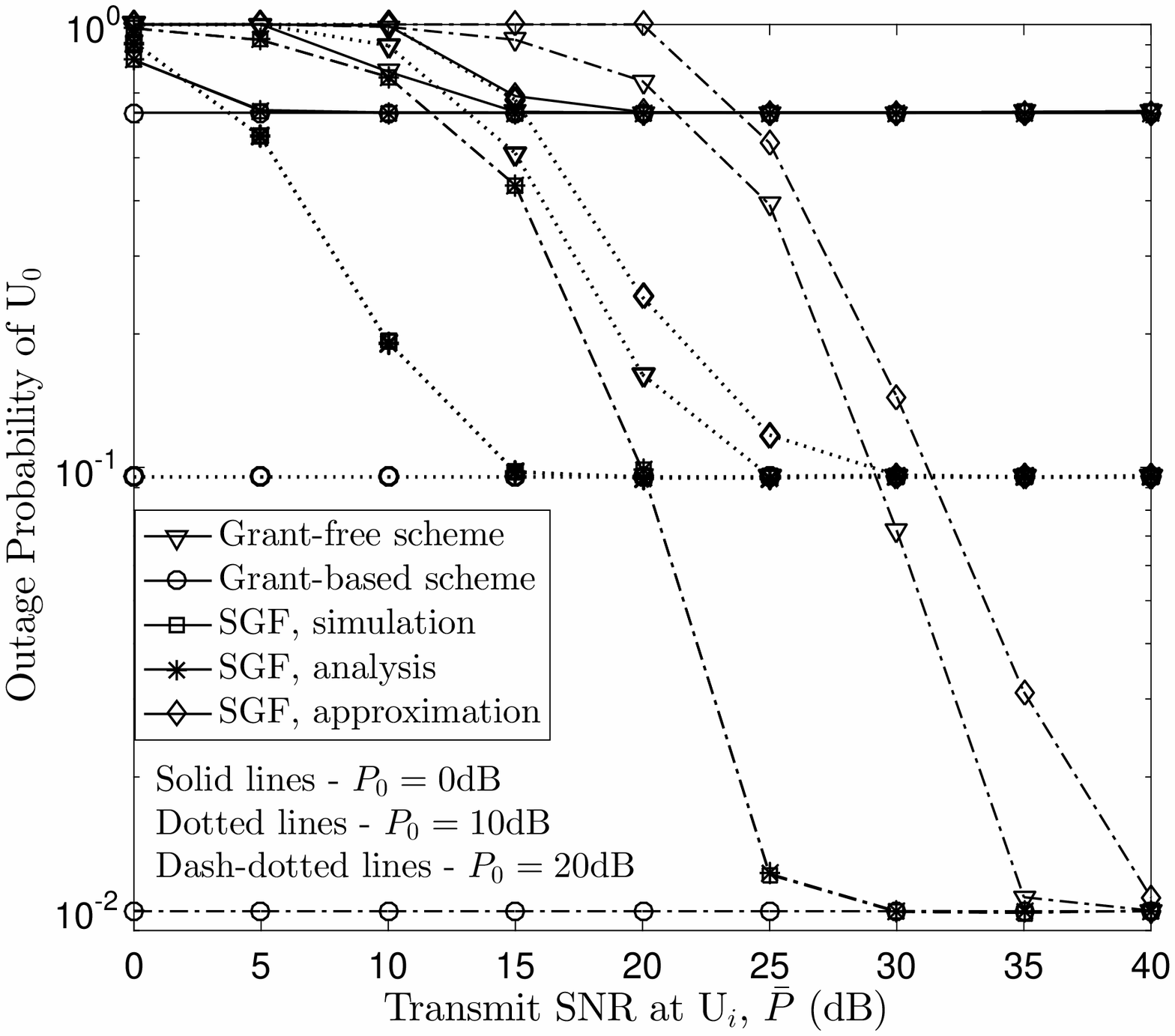}} \subfigure[ $\mathbb{P}_i^{\mathrm{II, OL}} $ for $P_0=20$ dB    and $N=5$. ]{\label{fig5a}\includegraphics[width=0.43\textwidth]{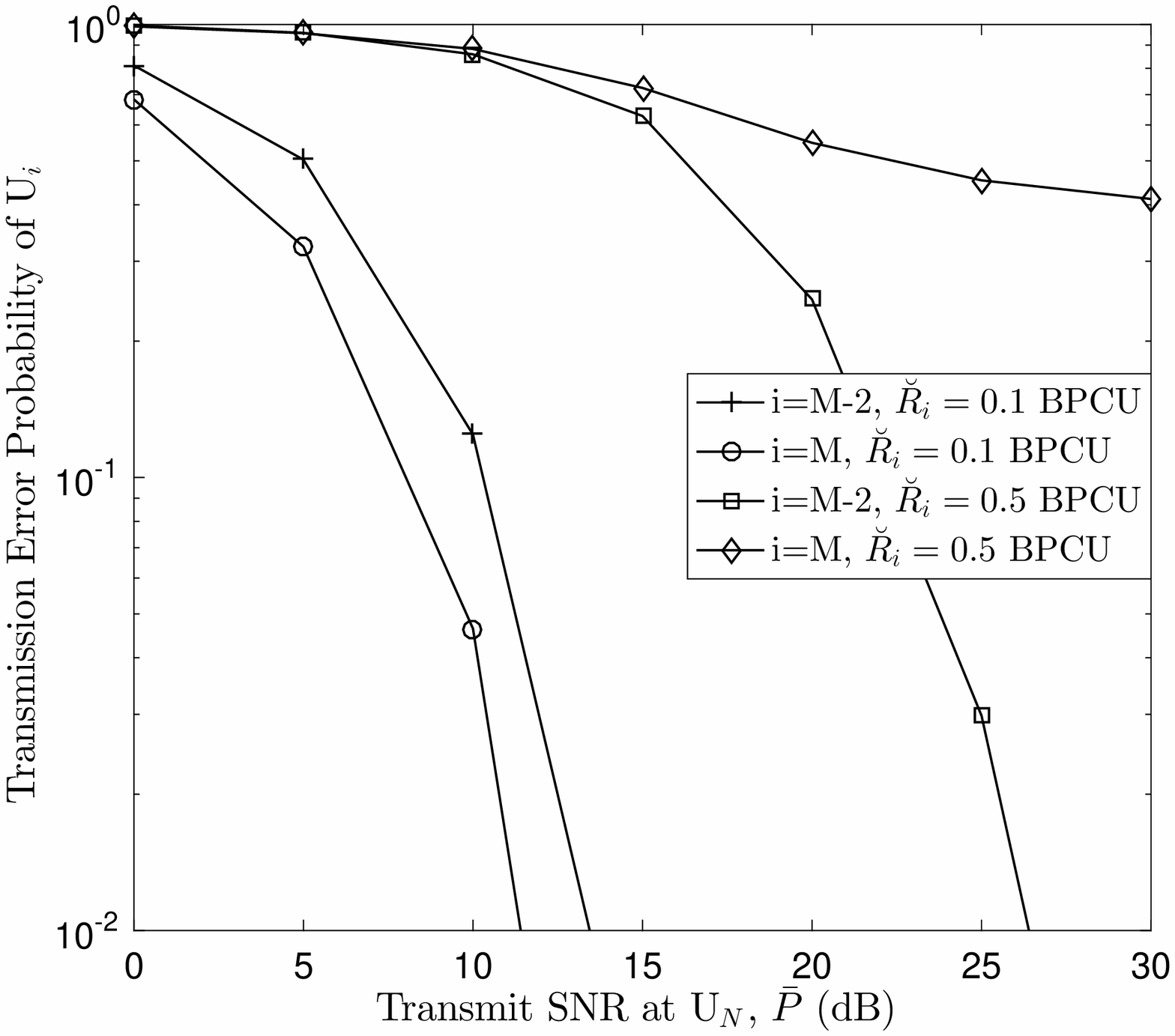}}
\end{center}
 \caption{Performance of  Type II  SGF NOMA transmission   with open-loop contention control. $M=10$ and $\tau=1$.  }\label{fig5}
\end{figure}



\begin{figure}[!htbp]\centering \vspace{-1em}
    \epsfig{file=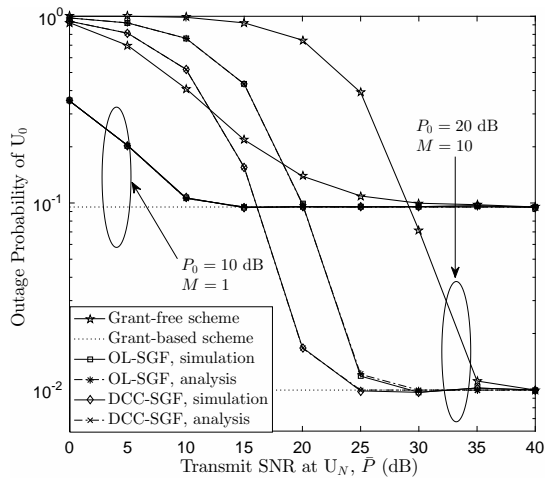, width=0.43\textwidth, clip=}\vspace{-1em}
\caption{ Impact of  Type II  SGF NOMA transmission  on the outage probability of $\mathrm{U}_0$ with open-loop contention control and distributed contention control. $\tau=1$, $M=20$, $R_0=1$ BPCU and $R_i=0.5$ BPCU.   \vspace{-2em} }\label{fig6}
\end{figure}

In Figs. \ref{fig5},  \ref{fig6}, and \ref{fig7}, the performance of the proposed Type II SGF protocol is evaluated, where the conventional  grant-based and grant-free protocols are used as benchmarks. Fig. \ref{fig5} demonstrates the impact of the proposed Type II SGF protocol with open-loop contention control on  $\mathbb{P}_0^{\mathrm{II, OL}} $ and $\mathbb{P}_i^{\mathrm{II, OL}} $, respectively.  Consistent  with Fig.~\ref{fig 1}, the use of the proposed   SGF protocol can ensure that $\text{U}_0$ experiences the same outage performance as if it solely occupied  $\text{B}_0$, while  the grant-free scheme realizes the worst performance among the three considered schemes. However, unlike the Type I protocol,  the gap between the grant-based and SGF schemes is reduced by increasing $\bar{P}$, as can be observed from   Fig. \ref{fig5b}.    The outage probability  in Fig. \ref{fig5b} has an error floor since the outage probability of the SGF scheme is lower bounded by the outage probability for the grant-based scheme, i.e., $\mathbb{P}(1+|h_0|^2P_0<R_0)$.   Fig. \ref{fig5a} shows the interesting phenomenon that  $\mathbb{P}_i^{\mathrm{II, OL}}$ is smaller than  $\mathbb{P}_j^{\mathrm{II, OL}} $, for $i>j$, if $\breve{R}_i$ is small enough. Otherwise,  $\mathbb{P}_i^{\mathrm{II, OL}}$ can be larger than  $\mathbb{P}_j^{\mathrm{II, OL}} $. Characterizing this impact of $\breve{R}_i$ on $\mathbb{P}_i^{\mathrm{II, OL}} $ by finding  a closed-form expression for  $\mathbb{P}_i^{\mathrm{II, OL}} $ is an important topic for future research.

In Fig. \ref{fig6}, the performances of the proposed Type II SGF schemes  with    open-loop and distributed contention control   are compared. Note that the two control mechanisms become identical  if there is only one grant-free  user, which is the reason why the  curves for the two schemes coincide  for $M=1$. By increasing $M$,  the open-loop based scheme offers the benefit that more grant-free  users are admitted to $\text{B}_0$, but the resulting  outage performance is worse than that of  the scheme with distributed contention control.  It is worth pointing out that both   schemes   outperform the grant-free scheme and achieve  the same performance as the grant-based scheme for large  $\bar{P}$. In Fig. \ref{fig7}, the impact of the Type II SGF scheme  with distributed contention  control on the selected grant-free  user's outage performance is studied. As discussed in Remark 11, unlike Type I SGF, the use of Type II SGF can ensure that the outage probability of the grant-free  user  approaches zero as $\bar{P}$ grows, which is confirmed by the simulation results   in Fig.~\ref{fig7}. 
 
\begin{figure}[!htbp]\centering \vspace{-1em}
    \epsfig{file=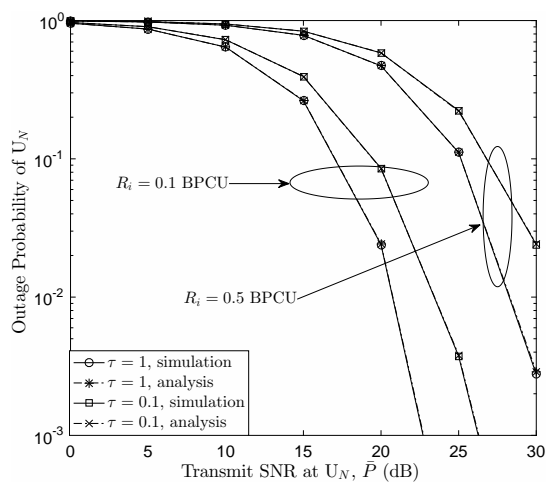, width=0.43\textwidth, clip=}\vspace{-1em}
\caption{ Impact of  Type II  SGF NOMA transmission  with    distributed contention control on the outage probability of $\mathrm{U}_N$.  $P_0=30$ dB and $N=5$.    \vspace{-1em} }\label{fig7}
\end{figure}

 \section{Conclusions}
In this paper, we have proposed  an SGF communication scheme, where one user is granted the right to transmit  via a grant based protocol and the other users are admitted to the same channel via a grant-free protocol. Two contention control mechanisms have been  proposed to ensure that the number of  users admitted to the same channel is carefully controlled. This feature  is particularly important  for scenarios  with an excessive number of active users, where    most existing grant-free schemes are not applicable  since  admitting a large number  of users to the same channel   can lead to  MUD failure. Analytical results and an asymptotic analysis have been provided   to demonstrate the superior performance of the proposed NOMA assisted SGF schemes and to study the impact of different SIC decoding orders. In particular, the proposed Type I SGF  schemes are ideally suited for the
scenario, where the grant-based user is close to the base station and the grant-free users are cell-edge users. On the other hand,  the proposed Type II SGF schemes   are ideal for the scenario, where the grant-based user is a cell-edge user and the grant-free users are close to the base station.  In this paper, each user is assumed to know its CSI perfectly.  An important topic for future research is the study of the impact of imperfect CSI   on the design of SGF schemes.
 \appendices
 \section{Proof for Theorem \ref{theorem1}} \label{appendix1}
 
By using order statistics \cite{David03}, $\mathbb{P}(N=n)$ can be obtained as follows: 
\begin{align}\label{N=n}
\mathbb{P}(N=n)=&\mathbb{P}\left(|h_{(n)}|^2<\tau, |h_{(n+1)}|^2>\tau\right)
\\\nonumber =&\frac{M!}{n!(M-n)!}e^{-(M-n)\tau}\left(1-e^{-\tau}\right)^n,
\end{align}
for $1\leq n \leq (M-1)$. It is straightforward to show that the expression in \eqref{N=n} is also valid for the case    $n=M$. The remainder  of the proof   focuses on the calculation of $Q_1$ in \eqref{q111}, the outage probability conditioned on $N=n$. 
  
$Q_1$ can be first rewritten    as follows:
\begin{align}
Q_1 =&  \mathbb{P}\left(\left. \sum^{n}_{j=1}|h_{(j)}|^2>\frac{ |h_0|^2P_0\bar{P}^{-1}}{2^{R_0}-1}-\bar{P}^{-1}\right| N=n\right).
\end{align}
The pdf of $\sum^{n}_{j=1}|h_{(j)}|^2$ can be found by treating it as the sum of the $n$ smallest order statistics among the $M$ channel gains \cite{Alam791}. Since all $|h_{(j)}|^2$, $1\leq j\leq n$, are smaller than $\tau$, a simple alternative is to treat $\sum^{n}_{j=1}|h_{(j)}|^2$ as the sum of $n$  i.i.d. variables, denoted by $|\tilde{h}_{j}|^2$  with the following CDF:
\begin{align}\label{fhj}
F_{|\tilde{h}_{j}|^2}(y)= \left\{ \begin{array}{ll}\frac{1-e^{-y}}{1-e^{-\tau}},
& \text{if}\quad y\leq \tau \\1,&\text{if}\quad y>\tau
 \end{array} \right..
\end{align} 
The  pdf of $|\tilde{h}_{j}|^2$, denoted by $f_{|\tilde{h}_{j}|^2}(y)$, can be obtained straightforwardly. The Laplace transform of $f_{|\tilde{h}_{j}|^2}(y)$ is given by
\begin{align}
\mathcal{L}\left(f_{|\tilde{h}_{j}|^2}(y)\right) = \frac{1-e^{-(s+1)\tau}}{(s+1)(1-e^{-\tau})}.
\end{align}
Since the $|\tilde{h}_{j}|^2$ are i.i.d., the Laplace transform of the pdf of $\sum^{n}_{j=1}|\tilde{h}_{j}|^2$ is given by
\begin{align}
\mathcal{L}\left(f_{\sum^{n}_{j=1}|\tilde{h}_{j}|^2}(y)\right) =& \frac{\left(1-e^{-(s+1)\tau}\right)^n}{(s+1)^n(1-e^{-\tau})^n}
\\\nonumber =&\sum^{n}_{p=0} \frac{{n \choose p}(-1)^pe^{-p\tau} }{(1-e^{-\tau})^n}\frac{e^{-p\tau s}}{(s+1)^n}.
\end{align}
By applying the inverse Laplace transform and also using the fact that $\sum^{n}_{j=1}|\tilde{h}_{j}|^2$ and $\sum^{n}_{j=1}| {h}_{(j)}|^2$ have the same pdf, the pdf of $\sum^{n}_{j=1}| {h}_{(j)}|^2$ is obtained as follows:
\begin{align}\nonumber
f_{\sum^{n}_{j=1}| {h}_{(j)}|^2}(y)=&\mathcal{L}^{-1}\left(\sum^{n}_{p=0} \frac{{n \choose p}(-1)^pe^{-p\tau} }{(1-e^{-\tau})^n}\frac{e^{-p\tau s}}{(s+1)^n}\right) \\\label{pdf sum} =& \sum^{n}_{p=0} \frac{{n \choose p}(-1)^pe^{-p\tau} }{(1-e^{-\tau})^n (n-1)!}    (y-p\tau)^{n-1} e^{-(y-p\tau)} u(y-p\tau) .
\end{align}
Intuitively, $f_{\sum^{n}_{j=1}| {h}_{(j)}|^2}(y)=0$ for $y\geq n\tau$, since $| {h}_{(j)}|^2<\tau$ for $1\leq j \leq n$. Because the pdf expression in \eqref{pdf sum} contains the step function, it is not straightforward  to show that   this  expression  fits the intuition. Thus, we verify this result  in the following lemma.
\begin{lemma} \label{lemma1}
The   expression for the pdf of $ \sum^{n}_{j=1}| {h}_{(j)}|^2 $ in \eqref{pdf sum} has the following property:
\begin{align}
f_{\sum^{n}_{j=1}| {h}_{(j)}|^2}(y)=0,
\end{align}
for $y\geq n\tau$. 
\end{lemma}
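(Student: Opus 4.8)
The plan is to confirm the intuition head-on: for $y\ge n\tau$ every step function in \eqref{pdf sum} is simultaneously active, and the surviving sum over $p$ collapses to zero through a standard finite-difference identity. First I would observe that when $y\ge n\tau$ we have $p\tau\le n\tau\le y$ for every $p\in\{0,1,\dots,n\}$, so that $u(y-p\tau)=1$ for all $p$ and no term is discarded. This is the only genuinely delicate point, since it is precisely the presence of the step functions that obscures the otherwise obvious claim.

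Next I would clear the exponentials. Using $e^{-(y-p\tau)}=e^{-y}e^{p\tau}$, the factor $e^{p\tau}$ cancels the prefactor $e^{-p\tau}$ in each summand, leaving, for $y\ge n\tau$,
\begin{align}\nonumber
f_{\sum^{n}_{j=1}|h_{(j)}|^2}(y)=\frac{e^{-y}}{(1-e^{-\tau})^n (n-1)!}\sum^{n}_{p=0}{n\choose p}(-1)^p (y-p\tau)^{n-1}.
\end{align}
Since the prefactor $e^{-y}/\left[(1-e^{-\tau})^n(n-1)!\right]$ never vanishes, it suffices to prove that the remaining sum $S(y)\triangleq\sum_{p=0}^n {n\choose p}(-1)^p (y-p\tau)^{n-1}$ is identically zero.

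Finally I would dispatch $S(y)$ by recognizing it, up to the sign $(-1)^n$, as the $n$-th forward finite difference at the origin of the map $p\mapsto(y-p\tau)^{n-1}$; because this map is a polynomial in $p$ of degree $n-1<n$, its $n$-th finite difference vanishes identically, giving $S(y)=0$. For the write-up a more elementary and self-contained route is to expand $(y-p\tau)^{n-1}=\sum_{k=0}^{n-1}{n-1\choose k}y^{n-1-k}(-\tau)^k p^k$ and interchange the order of summation, which reduces $S(y)$ to a fixed linear combination of the alternating binomial moments $\sum_{p=0}^n {n\choose p}(-1)^p p^k$ for $0\le k\le n-1$; each of these equals $0$ for every $k<n$ (the case $k=0$ being simply $(1-1)^n$), whence $S(y)=0$ and the lemma follows.

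I do not anticipate any real obstacle: the substantive content is entirely the first step, namely the careful accounting of the unit step functions so that all $p\in\{0,\dots,n\}$ contribute; once that is settled, the vanishing of $S(y)$ is the classical fact that an alternating binomial sum annihilates every polynomial of degree below $n$.
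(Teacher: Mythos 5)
Your proposal is correct and follows essentially the same route as the paper's own proof: drop the step functions once $y\geq n\tau$ forces $u(y-p\tau)=1$ for all $p$, cancel $e^{p\tau}$ against $e^{-p\tau}$, expand $(y-p\tau)^{n-1}$ in powers of $p$, and invoke the vanishing of the alternating binomial sums $\sum_{p=0}^{n}\binom{n}{p}(-1)^p p^l$ for $l<n$ (the paper cites \cite[Eq.~(0.154.3)]{GRADSHTEYN} for this). Your finite-difference framing is just a repackaging of that same identity, so there is no substantive difference.
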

\begin{proof}
See Appendix \ref{appendix x1}.
\end{proof}

By using $f_{\sum^{n}_{j=1}| {h}_{(j)}|^2}(y)$,     probability   $Q_1$ can be expressed as follows:
\begin{align}
Q_1 =&\int^{\infty}_{\epsilon_0P_0^{-1}}\int_{\epsilon_x}^{\infty}f_{\sum^{n}_{j=1}| {h}_{(j)}|^2}(y)dy  f_{|h_0|^2}(x)dx +\int^{\epsilon_0P_0^{-1}}_0   f_{|h_0|^2}(x)dx\\\nonumber =& \sum^{n}_{p=0} \frac{{n \choose p}(-1)^pe^{-p\tau} }{(1-e^{-\tau})^n (n-1)!} \int^{\infty}_{\epsilon_0P_0^{-1}}\int_{\epsilon_x}^{\infty} (y-p\tau)^{n-1}\\\nonumber &\times  e^{-(y-p\tau)} u(y-p\tau) dyf_{|h_0|^2}(x)dx+1-e^{-\epsilon_0P_0^{-1}},
\end{align}
where $\epsilon_x=\frac{xP_0\bar{P}^{-1}}{2^{R_0}-1}-\bar{P}^{-1}$. 
Define $\bar{\epsilon}_x = (\epsilon_x-p\tau)^+$, where $(x)^+\triangleq \max(0, x)$.  

Note that the upper end of the integration range for $\sum^{n}_{j=1}| {h}_{(j)}|^2$ should be $n\tau$ since each ${h}_{(j)}$ is upper bounded by $\tau$, but   can be replaced by $\infty$ because of Lemma \ref{lemma1}.
Therefore,  $Q_1$ can be evaluated  as follows:
\begin{align}\nonumber 
Q_1   =& \sum^{n}_{p=0} \frac{{n \choose p}(-1)^pe^{-p\tau} }{(1-e^{-\tau})^n (n-1)!} \int^{\infty}_{\epsilon_0P_0^{-1}}  \int_{\bar{\epsilon}_x}^{\infty} x^{n-1} e^{-x}   dy\\\nonumber&\times f_{|h_0|^2}(x)dx +1-e^{-\epsilon_0P_0^{-1}}\\  &\overset{(b)}{=}\sum^{n}_{p=0} \frac{{n \choose p}(-1)^pe^{-p\tau} }{(1-e^{-\tau})^n (n-1)!}   \int^{\infty}_{\epsilon_0P_0^{-1}}  \Gamma(n,\bar{\epsilon}_x)f_{|h_0|^2}(x)dx +1-e^{-\epsilon_0P_0^{-1}},
\end{align}
where step $(b)$ follows  \cite[Eq. (3.38.3)]{GRADSHTEYN} and $\Gamma(\cdot)$ denotes the upper incomplete Gamma function. By using the series expansion  of the Gamma function \cite{GRADSHTEYN}, $Q_1$ can be expressed as follows:
\begin{align}
Q_1   =& 1-e^{-\epsilon_0P_0^{-1}}+  \sum^{n}_{p=0} \frac{{n \choose p}(-1)^pe^{-p\tau} }{(1-e^{-\tau})^n (n-1)!}  \sum_{l=0}^{n-1}\frac{(n-1)!  }{l!} \\\nonumber &\times \int^{\infty}_{\epsilon_0P_0^{-1}}  \left[ \left(\frac{xP_0\bar{P}^{-1}}{2^{R_0}-1}-\bar{P}^{-1}-p\tau\right)^+\right]^l e^{-(x+\bar{\epsilon}_x)}dx
\\\nonumber
=&  1-e^{-\epsilon_0P_0^{-1}}+ \sum^{n}_{p=0} \frac{{n \choose p}(-1)^pe^{-p\tau} }{(1-e^{-\tau})^n (n-1)!}  \sum_{l=0}^{n-1}\frac{(n-1)!  }{l!} \\\nonumber &\times \int^{\infty}_{\epsilon_0P_0^{-1}}  \left[ \left( \frac{x -\epsilon_0P_0^{-1}-\epsilon_0P_0^{-1} \bar{P}p\tau}{\epsilon_0P_0^{-1}\bar{P}}\right)^+\right]^l e^{-(x+\bar{\epsilon}_x)}dx.
\end{align}
The expression for $Q_1$ can be further simplified  as  follows:
\begin{align}\nonumber
Q_1 = 1-e^{-\epsilon_0P_0^{-1}}+  \sum^{n}_{p=0} \frac{{n \choose p}(-1)^pe^{-p\tau} }{(1-e^{-\tau})^n (n-1)!}  \left(\sum_{l=0}^{n-1}\frac{(n-1)!  }{l!} \right.\\\nonumber \times \underset{Q_{11}}{\underbrace{ \int^{\infty}_{\epsilon_0P_0^{-1}+\epsilon_0P_0^{-1} \bar{P}p\tau }  \left[  \frac{x -\epsilon_0P_0^{-1}-\epsilon_0P_0^{-1} \bar{P}p\tau}{\epsilon_0P_0^{-1}\bar{P}} \right]^l e^{-(x+\bar{\epsilon}_x)}dx}}\\ \nonumber \left.+ \underset{Q_{12}}{\underbrace{(n-1)!   \int_{\epsilon_0P_0^{-1}}^{\epsilon_0P_0^{-1}+\epsilon_0P_0^{-1} \bar{P}p\tau }   e^{-x }dx}}\right),
\end{align}
where the integral range in $Q_{11}$ is reduced since $\epsilon_0P_0^{-1}+\epsilon_0P_0^{-1} \bar{P}p\tau >\epsilon_0P_0^{-1}$ and $ \left( \frac{x -\epsilon_0P_0^{-1}-\epsilon_0P_0^{-1} \bar{P}p\tau}{\epsilon_0P_0^{-1}\bar{P}}\right)^+= 0$ if $x< \epsilon_0P_0^{-1}+\epsilon_0P_0^{-1} \bar{P}p\tau $.  $Q_{12}$ in the above equation is needed due to the fact that, for the  special case of $l=0$, the integral is not zero even if $\bar{\epsilon}_x=0$.

By applying   \cite[Eq. (3.381.4)]{GRADSHTEYN}, $Q_1$ can be expressed as follows:
\begin{align}
Q_1   =& \sum^{n}_{p=0} \frac{{n \choose p}(-1)^pe^{-p\tau} }{(1-e^{-\tau})^n (n-1)!} \left( \sum_{l=0}^{n-1}\frac{(n-1)!  }{l!}\right. \\\nonumber &\times \left.\frac{  e^{-\epsilon_0P_0^{-1}(1+  \bar{P}p\tau) } l!}{\epsilon_0^lP_0^{-l} \bar{P}^l (1+\epsilon_0^{-1}P_0 \bar{P}^{-1})^{(l+1)} } +(n-1)!\left(e^{-\epsilon_0P_0^{-1}}-e^{-\epsilon_0P_0^{-1}(1+\bar{P}p\tau)}\right)\right)   +1-e^{-\epsilon_0P_0^{-1}}.
\end{align}

Therefore, the outage probability of $\text{U}_0$ can be obtained as follows:
\begin{align}\nonumber
\mathbb{P}_0^{\mathrm{I, OL}}  =& \sum_{n=1}^{M}\frac{M!e^{-(M-n)\tau}\left(1-e^{-\tau}\right)^n}{n!(M-n)!}\left( \sum^{n}_{p=0} \frac{{n \choose p}(-1)^pe^{-p\tau} }{(1-e^{-\tau})^n }\right.\\\nonumber &\times \left. \left( \sum_{l=0}^{n-1} \frac{  e^{-\epsilon_0P_0^{-1}(1+  \bar{P}p\tau) } }{\epsilon_0^lP_0^{-l} \bar{P}^l (1+\epsilon_0^{-1}P_0 \bar{P}^{-1})^{(l+1)} } \right.\right.\\\nonumber &\left.\left.+ e^{-\epsilon_0P_0^{-1}}-e^{-\epsilon_0P_0^{-1}(1+\bar{P}p\tau)} \right)  +1-e^{-\epsilon_0P_0^{-1}}
\right)
\\\label{proof1} &+  e^{-M\tau} \left(1 - e^{-\epsilon_0P_0^{-1}}\right).
\end{align}
 Note that  
 \begin{align}\label{eq56}
  \sum_{n=1}^{M}\frac{M!}{n!(M-n)!}e^{-(M-n)\tau}\left(1-e^{-\tau}\right)^n  
+  e^{-M\tau}  =1.
\end{align}
By applying \eqref{eq56}   to \eqref{proof1}, the expression  for $\mathbb{P}_0^{\mathrm{I, OL}} $ can be simplified as shown in the theorem, and the proof is complete.  
 
 \section{Proof for Lemma \ref{lemma1}}\label{appendix x1}
The lemma can be proved by first  rewriting the pdf as follows:
\begin{align}
f_{\sum^{n}_{j=1}| {h}_{j}|^2}(y)  =&\frac{e^{-y} }{(1-e^{-\tau})^n (n-1)!} \sum^{n}_{p=0} {n \choose p}(-1)^p \\\nonumber &\times  (y-p\tau)^{n-1}  u(y-p\tau) .
\end{align}
In the case of $y\geq n\tau$, $u(y-p\tau) =1$, which means that the pdf can be simplified as follows:
\begin{align}
f_{\sum^{n}_{j=1}| {h}_{j}|^2}(y)  =&\frac{e^{-y}   }{(1-e^{-\tau})^n (n-1)!} \sum^{n}_{p=0} {n \choose p}(-1)^p   \sum^{n-1}_{l=0}y^{n-1-l}(-1)^lp^l \\\nonumber
=&\frac{e^{-y}   }{(1-e^{-\tau})^n (n-1)!} \sum^{n-1}_{l=0}y^{n-1-l}  \sum^{n}_{p=0} {n \choose p}(-1)^p(-1)^lp^l .
\end{align}
It is important to point out that $l$ is strictly smaller than $n$. According to  in \cite[Eq. (0.154.3)]{GRADSHTEYN}, we have 
\begin{align}
\sum^{n}_{p=0} {n \choose p}(-1)^p(-1)^lp^l =0,
\end{align}
for $l<n$. Therefore $f_{\sum^{n}_{j=1}| {h}_{j}|^2}(y) =0$, for  $y\geq n\tau$. This   completes the proof.

\section{Proof for Theorem \ref{theorem2}}\label{prooftheorem2}

The first step of the proof is to find $\mathbb{P}(N=n) $, which  can be expressed as follows:
\begin{align}\label{eq67}
\mathbb{P}(N=n) = \mathbb{P}\left(|h_{(M-n)}|^2<\tau, |h_{(M-n+1)}|^2>\tau\right).
\end{align}
Note that \eqref{eq67}   is different from \eqref{N=n}.  By applying order statistics,   the outage probability can be expressed as follows:
\begin{align} \label{p0xxx}
\mathbb{P}_0^{\mathrm{II, OL}}  = &1 - \sum^{M}_{n=1}\frac{M!}{n!(M-n)!}e^{-n\tau}\left(1-e^{-\tau}\right)^{M-n} Q_3\\\nonumber &-\left(1-e^{-\tau}\right)^M e^{-\epsilon_0P_0^{-1}}.
\end{align}
The remainder  of the proof is to find an expression for $Q_3$. 
With some algebraic manipulations, $Q_3$ can be expressed as follows:
\begin{align} 
Q_3 = & \underset{|h_0|^2>\frac{\epsilon_0}{P_0}}{\mathcal{E}}\left\{\mathbb{P}\left( \left. y_n>\epsilon_{s,n}\bar{P}^{-1}(1+P_0|h_0|^2)\right|N=n\right)\right\},
\end{align}
where $\mathcal{E}\{\cdot\}$ denotes the expectation operation, and  $y_n=\sum^{M}_{j=M-n+1} |h_{(j)}|^2 $. 

By applying order statistics, one can evaluate $Q_3$ by using the distribution of the sum of the $n$ largest order statistics among   $M $ Rayleigh fading gains. However, a simpler alternative is to treat $y_n$ as the sum of $n$ i.i.d. random variables, denoted by $\tilde{g}_j$, with the following CDF:
\begin{align}\label{tildeg}
F_{|\tilde{g}_{j}|^2}(y)= \left\{ \begin{array}{ll}0,
& \text{if}\quad y\leq \tau \\\frac{e^{-\tau}-e^{-y}}{e^{-\tau}},&\text{if}\quad y>\tau
 \end{array} \right.,
\end{align} 
which is different from \eqref{fhj} since $|\tilde{g}_j|^2\geq \tau$ and $|\tilde{h}_j|^2\leq \tau$. The pdf of $|\tilde{g}_{j}|^2$, $f_{|\tilde{g}_{j}|^2}(y)$, can be obtained straightforwardly.  The Laplace transform of $f_{|\tilde{g}_{j}|^2}(y)$ is given by
\begin{align}
\mathcal{L}\left(f_{|\tilde{g}_{j}|^2}(y)\right) = \frac{e^{-s\tau}}{s+1}.
\end{align}
Since the $|\tilde{g}_{j}|^2$ are i.i.d.,  the Laplace transform of the pdf of $y_n$ can be found as follows:
\begin{align}
\mathcal{L}\left(f_{y_n}(y)\right) = \frac{e^{-n\tau s}}{(s+1)^n},
\end{align}
which yields the following expression for the pdf of $y_n$
\begin{align}
f_{y_n} (y) = \frac{(y-n\tau)^{n-1} e^{-(y-n\tau)}u(y-n\tau)}{(n-1)!}. 
\end{align}

As a result, $Q_3 $ can be calculated as follow:
\begin{align} 
Q_3 = & \underset{|h_0|^2>\epsilon_0P_0^{-1}}{\mathcal{E}}\left\{ \int_{\epsilon_{s,n}\bar{P}^{-1}(1+P_0|h_0|^2)}^{\infty} f_{y_n}(y)dy\right\}\\\nonumber =& \underset{|h_0|^2>\epsilon_0P_0^{-1}}{\mathcal{E}}\left\{ \int_{(\Delta_h-n\tau)^+}^{\infty}  \frac{z^{n-1} e^{-z}}{(n-1)!}dy\right\} = \underset{|h_0|^2>\epsilon_0P_0^{-1}}{\mathcal{E}}\left\{    \frac{\Gamma(n,(\Delta_h-n\tau)^+)}{(n-1)!} \right\},
\end{align}
where $\Delta_h=  \epsilon_{s,n}\bar{P}^{-1}(1+P_0|h_0|^2) $.

Next, by applying the series expression of the incomplete Gamma function \cite{GRADSHTEYN}, we obtain the following: 
\begin{align} 
Q_3  =& \underset{|h_0|^2>\epsilon_0P_0^{-1}}{\mathcal{E}}\left\{     e^{-(\Delta_h-n\tau)^+} \sum^{n-1}_{l=0}\frac{\left[(\Delta_h-n\tau)^+\right]^l}{l!}\right\}\\\nonumber 
=& \int^{\infty}_{\epsilon_0P_0^{-1}}e^{-\epsilon_{s,n}\bar{P}^{-1}P_0\left(x-\frac{n\tau \epsilon_{s,n}^{-1}\bar{P}-1}{P_0}\right)^+} \sum^{n-1}_{l=0}\frac{\epsilon_{s,n}^l\bar{P}^{-l}P_0^l}{l!}   \left[\left(x-\frac{n\tau\epsilon_{s,n}^{-1}\bar{P}-1}{P_0}\right)^+\right]^l e^{-x}dx.
\end{align}
 
Similar to the proof for Theorem \ref{theorem1}, the result for the case  $l=0$ needs to be calculated separately   as follows:
\begin{align} \nonumber
Q_3  =& \int^{\infty}_{\epsilon_0P_0^{-1}}e^{-\epsilon_{s,n}\bar{P}^{-1}P_0\left(x-\frac{n\tau\epsilon_{s,n}^{-1}\bar{P}-1}{P_0}\right)^+} \sum^{n-1}_{l=1}\frac{\epsilon_{s,n}^l\bar{P}^{-l}P_0^l}{l!}   \left[\left(x-\frac{n\tau\epsilon_{s,n}^{-1}\bar{P}-1}{P_0}\right)^+\right]^l e^{-x}dx\\  &+\int^{\infty}_{\epsilon_0P_0^{-1}}e^{-\epsilon_{s,n}\bar{P}^{-1}P_0\left(x-\frac{n\tau\epsilon_{s,n}^{-1}\bar{P}-1}{P_0}\right)^+}  e^{-x}dx.
\end{align}
Since $\tau_n$ is not necessarily larger than $\epsilon_0P_0^{-1}$, we introduce  $\tau_n$ and $\bar{\tau}_n$ as defined in the theorem and $Q_3$ can be calculated as follows:
\begin{align} \nonumber 
Q_3  =&  \int^{\infty}_{\bar{\tau}_n}e^{-\epsilon_{s,n}\bar{P}^{-1}P_0\left(x-\tau_n\right)} \sum^{n-1}_{l=1}\frac{\epsilon_{s,n}^l\bar{P}^{-l}P_0^l}{l!}   \left[x-\tau_n\right]^l e^{-x}dx\\\nonumber &+ \int^{\bar{\tau}_n}_{\epsilon_0P_0^{-1}}    e^{-x}dx+\int^{\infty}_{\bar{\tau}_n}e^{-\epsilon_{s,n}\bar{P}^{-1}P_0\left(x-\tau_n\right)}      e^{-x}dx
\\\nonumber 
=&  \sum^{n-1}_{l=1}\frac{\epsilon_{s,n}^l\bar{P}^{-l}P_0^l}{l!}e^{-\tau_n} \frac{\Gamma\left(l+1, (\bar{\tau}_n-\tau_n)(1+\epsilon_{s,n}\bar{P}^{-1}P_0)\right)}{(1+\epsilon_{s,n}\bar{P}^{-1}P_0)^{l+1}}\\  \label{qx3}&+ e^{-\epsilon_0P_0^{-1}}-e^{-\bar{\tau}_n}  + \frac{  e^{- \tau_n- (\bar{\tau}_n-\tau_n)(1+\epsilon_{s,n}\bar{P}^{-1}P_0) }}{1+\epsilon_{s,n}\bar{P}^{-1}P_0}. 
\end{align}
By substituting \eqref{qx3} into \eqref{p0xxx}, the closed-form expression for $\mathbb{P}_0^{\mathrm{II, OL}} $ can be obtained and the proof is complete. 

\linespread{1.5}
   \bibliographystyle{IEEEtran}
\bibliography{IEEEfull,trasfer}

\begin{thebibliography}{10}
\providecommand{\url}[1]{#1}
\csname url@samestyle\endcsname
\providecommand{\newblock}{\relax}
\providecommand{\bibinfo}[2]{#2}
\providecommand{\BIBentrySTDinterwordspacing}{\spaceskip=0pt\relax}
\providecommand{\BIBentryALTinterwordstretchfactor}{4}
\providecommand{\BIBentryALTinterwordspacing}{\spaceskip=\fontdimen2\font plus
\BIBentryALTinterwordstretchfactor\fontdimen3\font minus
  \fontdimen4\font\relax}
\providecommand{\BIBforeignlanguage}[2]{{%
\expandafter\ifx\csname l@#1\endcsname\relax
\typeout{** WARNING: IEEEtran.bst: No hyphenation pattern has been}%
\typeout{** loaded for the language `#1'. Using the pattern for}%
\typeout{** the default language instead.}%
\else
\language=\csname l@#1\endcsname
\fi
#2}}
\providecommand{\BIBdecl}{\relax}
\BIBdecl

\bibitem{7894280}
M.~Shafi, A.~F. Molisch, P.~J. Smith, T.~Haustein, P.~Zhu, P.~D. Silva,
  F.~Tufvesson, A.~Benjebbour, and G.~Wunder, ``{5G}: A tutorial overview of
  standards, trials, challenges, deployment, and practice,'' \emph{IEEE J. Sel.
  Areas Commun.}, vol.~35, no.~6, pp. 1201--1221, Jun. 2017.

\bibitem{nomama}
Z.~Ding, Y.~Liu, J.~Choi, Q.~Sun, M.~Elkashlan, C.-L. I, and H.~V. Poor,
  ``Application of non-orthogonal multiple access in {LTE} and {5G} networks,''
  \emph{IEEE Commun. Mag.}, vol.~55, no.~2, pp. 185--191, Feb. 2017.

\bibitem{jsacnomaxmine}
Z.~Ding, X.~Lei, G.~K. Karagiannidis, R.~Schober, J.~Yuan, and V.~Bhargava, ``A
  survey on non-orthogonal multiple access for {5G} networks: Research
  challenges and future trends,'' \emph{IEEE J. Sel. Areas Commun.}, vol.~35,
  no.~10, pp. 2181--2195, Oct. 2017.

\bibitem{6933459}
M.~Al-Imari, P.~Xiao, M.~A. Imran, and R.~Tafazolli, ``Uplink non-orthogonal
  multiple access for {5G} wireless networks,'' in \emph{Proc. 11th Int.
  Symposium on Wireless Commun. Systems (ISWCS)}, Barcelona, Spain, Aug 2014,
  pp. 781--785.

\bibitem{7557079}
M.~S. Ali, H.~Tabassum, and E.~Hossain, ``Dynamic user clustering and power
  allocation for uplink and downlink non-orthogonal multiple access {(NOMA)}
  systems,'' \emph{IEEE Access}, vol.~4, pp. 6325--6343, 2016.

\bibitem{7812683}
Y.~Sun, D.~W.~K. Ng, Z.~Ding, and R.~Schober, ``Optimal joint power and
  subcarrier allocation for full-duplex multicarrier non-orthogonal multiple
  access systems,'' \emph{IEEE Trans. Commun.}, vol.~65, no.~3, pp. 1077--1091,
  Mar. 2017.

\bibitem{6871674he}
H.~Chen, R.~Abbas, P.~Cheng, M.~Shirvanimoghaddam, W.~Hardjawana, W.~Bao,
  Y.~Li, and B.~Vucetic, ``Ultra-reliable low latency cellular networks: Use
  cases, challenges and approaches,'' to appear in 2018.

\bibitem{8345745}
X.~Sun, S.~Yan, N.~Yang, Z.~Ding, C.~Shen, and Z.~Zhong, ``Short-packet
  downlink transmission with non-orthogonal multiple access,'' \emph{IEEE
  Trans. Wireless Commun.}, vol.~17, no.~7, pp. 4550--4564, Jul. 2018.

\bibitem{Bennisurllc}
M.~Bennis, M.~Debbah, and H.~V. Poor, ``Ultra-reliable and low-latency
  communication: Tail, risk and scale,'' \emph{Proceedings of the IEEE}, to
  appear in 2018.

\bibitem{huawei}
``{5G}, a techology vision,'' Huawei, Inc., Shengzheng, China, 5G Whitepaper,
  Mar. 2015.

\bibitem{8395153}
X.~Bian, J.~Tang, H.~Wang, M.~Li, and R.~Song, ``An uplink transmission scheme
  for pattern division multiple access based on {DFT} spread generalized
  multi-carrier modulation,'' \emph{IEEE Access}, vol.~6, pp. 34\,135--34\,148,
  2018.

\bibitem{8352626}
G.~Ma, B.~Ai, F.~Wang, X.~Chen, Z.~Zhong, Z.~Zhao, and H.~Guan, ``Coded tandem
  spreading multiple access for massive machine-type communications,''
  \emph{IEEE Wireless Commun.}, vol.~25, no.~2, pp. 75--81, Apr. 2018.

\bibitem{8323218}
L.~Liu and W.~Yu, ``Massive connectivity with massive {MIMO} - {Part I}: Device
  activity detection and channel estimation,'' \emph{IEEE Trans. on Signal
  Process.}, vol.~66, no.~11, pp. 2933--2946, Jun. 2018.

\bibitem{8320821}
------, ``Massive connectivity with massive {MIMO} - {Part II}: Achievable rate
  characterization,'' \emph{IEEE Trans. Signal Process.}, vol.~66, no.~11, pp.
  2947--2959, Jun. 2018.

\bibitem{8482464}
Y.~Du, C.~Cheng, B.~Dong, Z.~Chen, X.~Wang, J.~Fang, and S.~Li,
  ``Block-sparsity-based multiuser detection for uplink grant-free {NOMA},''
  \emph{IEEE Trans. Wireless Commu.}, to appear in 2018.

\bibitem{7972955}
M.~Shirvanimoghaddam, M.~Condoluci, M.~Dohler, and S.~J. Johnson, ``On the
  fundamental limits of random non-orthogonal multiple access in cellular
  massive {IoT},'' \emph{IEEE J. Sel. Topics Signal Process.}, vol.~35, no.~10,
  pp. 2238--2252, Oct. 2017.

\bibitem{jsacnoma10}
J.~Choi, ``{NOMA} based random access with multichannel {ALOHA},'' \emph{IEEE
  J. Sel. Areas Commun.}, vol.~PP, no.~99, pp. 1--1, 2017.

\bibitem{viswanath02}
P.~Viswanath, D.~Tse, and R.~Laroia, ``Opportunistic beamforming using dumb
  antennas,'' \emph{IEEE Trans. Inform. Theory}, vol.~48, pp. 1277-- 1294, Jun.
  2003.

\bibitem{Zhiguo_mmwave}
Z.~Ding, P.~Fan, and H.~V. Poor, ``Random beamforming in millimeter-wave {NOMA}
  networks,'' \emph{IEEE Access}, vol.~5, pp. 7667--7681, 2017.

\bibitem{Zhao2005s}
Q.~Zhao and L.~Tong, ``Opportunistic carrier sensing for energy-efficient
  information retrieval in sensor networks,'' \emph{EURASIP Journal on Wireless
  Communications and Networking}, vol.~2, no.~2, pp. 1--1, Apr. 2005.

\bibitem{Bletsas06}
A.~Bletsas, A.~Khisti, D.~P. Reed, and A.~Lippman, ``A simple cooperative
  diversity method based on network path selection,'' \emph{IEEE Journal
  Select. Areas in Comm.}, vol.~24, pp. 659--672, Mar. 2006.

\bibitem{6334506}
R.~Talak and N.~B. Mehta, ``Feedback overhead-aware, distributed, fast, and
  reliable selection,'' \emph{IEEE Trans. Commu.}, vol.~60, no.~11, pp.
  3417--3428, Nov. 2012.

\bibitem{Zhiguo_CRconoma}
Z.~Ding, P.~Fan, and H.~V. Poor, ``Impact of user pairing on {5G}
  non-orthogonal multiple access,'' \emph{IEEE Trans. Veh. Tech.}, vol.~65,
  no.~8, pp. 6010--6023, Aug. 2016.

\bibitem{David03}
H.~A. David and H.~N. Nagaraja, \emph{Order Statistics}.\hskip 1em plus 0.5em
  minus 0.4em\relax John Wiley, New York, 3rd ed., 2003.

\bibitem{Alam791}
K.~Alam and K.~T. Wallenius, ``Distribution of a sum of order statistics,''
  \emph{Scandinavian Journal of Statistics}, vol.~6, no.~3, pp. 123--126, 1979.

\bibitem{GRADSHTEYN}
I.~S. Gradshteyn and I.~M. Ryzhik, \emph{Table of Integrals, Series and
  Products}, 6th~ed.\hskip 1em plus 0.5em minus 0.4em\relax New York: Academic
  Press, 2000.

\end{thebibliography}
   \end{document}